\newcommand*\patchAmsMathEnvironmentForLineno[1]{%
      \expandafter\let\csname old#1\expandafter\endcsname\csname #1\endcsname
      \expandafter\let\csname oldend#1\expandafter\endcsname\csname end#1\endcsname
      \renewenvironment{#1}%
         {\linenomath\csname old#1\endcsname}%
         {\csname oldend#1\endcsname\endlinenomath}}%
    \newcommand*\patchBothAmsMathEnvironmentsForLineno[1]{%
      \patchAmsMathEnvironmentForLineno{#1}%
      \patchAmsMathEnvironmentForLineno{#1*}}%
\def\dispmuskip{\thinmuskip= 3mu plus 0mu minus 2mu \medmuskip=  4mu plus 2mu minus 2mu \thickmuskip=5mu plus 5mu minus 2mu}
\def\textmuskip{\thinmuskip= 0mu                    \medmuskip=  1mu plus 1mu minus 1mu \thickmuskip=2mu plus 3mu minus 1mu}
\def\beq{\dispmuskip\begin{equation}}    \def\eeq{\end{equation}\textmuskip}
\def\beqn{\dispmuskip\begin{displaymath}}\def\eeqn{\end{displaymath}\textmuskip}
\def\bea{\dispmuskip\begin{eqnarray}}    \def\eea{\end{eqnarray}\textmuskip}
\def\bean{\dispmuskip\begin{eqnarray*}}  \def\eean{\end{eqnarray*}\textmuskip}
\def\paradot#1{\vspace{1.3ex plus 0.7ex minus 0.5ex}\noindent{\bf\boldmath{#1.}}}
\newtheorem{theorem}{Theorem}
\newtheorem{algorithm}{Algorithm}
\newtheorem{assumption}{Assumption}
\newcommand{\KL}{Kullback-Leibler}
\newcommand{\wh}{\widehat}
\newcommand{\wt}{\widetilde}
\def\E{{\mathbb E}}                         
\def\V{{\mathbb V}}
\def\a{\alpha}
\def\s{\sigma}
\def\t{\theta}
\def\b{\beta}
\def\l{\lambda}
\def\N{{\cal N}}
\def\Kl{\text{\rm KL}}
\def\vec{\text{\rm vec}}
\def\vech{\text{\rm vech}}
\def\cov{\text{\rm cov}}
\def\IS{\text{\rm IS}}
\begin{document}
\doublespacing
\title{Variational Bayes with Intractable Likelihood}
\author{\normalsize Minh-Ngoc Tran, David J. Nott, Robert Kohn\footnote{  
Minh-Ngoc Tran is Lecturer, The University of Sydney Business School, Sydney 2006 Australia
(minh-ngoc.tran@sydney.edu.au). 
David J. Nott is Associate Professor, Department of Statistics and Applied Probability,
National University of Singapore, Singapore 117546 (standj@nus.edu.sg).
Robert Kohn is Professor, UNSW Business School,
University of New South Wales, Sydney 2052 Australia (r.kohn@unsw.edu.au).}}
\date{\empty}
\maketitle
\begin{abstract}
Variational Bayes (VB) is rapidly becoming a popular tool for Bayesian inference in statistical modeling.
However, the existing VB algorithms are restricted to cases where the likelihood is tractable,
which precludes the use of VB in many interesting situations such as in state space models and in approximate Bayesian computation (ABC),
where application of VB methods was previously impossible.   
This paper extends the scope of application of VB to cases where the likelihood is intractable, 
but can be estimated unbiasedly.
The proposed VB method therefore makes it possible to carry out Bayesian inference in many statistical applications,
including state space models and ABC.
The method is generic in the sense that it can be applied to almost all statistical models without requiring too much model-based derivation,
which is a drawback of many existing VB algorithms.
We also show how the proposed method can be used to obtain highly accurate VB approximations of marginal posterior distributions. 

\paradot{Keywords} Approximate Bayesian computation, marginal likelihood, natural gradient, quasi-Monte Carlo, state space models, stochastic optimization. 

\end{abstract}
\section{Introduction}\label{sec:Introduction}
Let $y$ be the data and $\t\in\Theta$ the vector of model parameters.
We are interested in Bayesian inference about $\t$, based on the posterior distribution $\pi(\t)=p(\t|y)\propto p(\t)p(y|\t)$, with $p(\t)$ the prior
and $p(y|\t)$ the likelihood function.
In this paper, we are interested in Variational Bayes (VB), which is widely used as a computationally effective method
for approximating the posterior distribution $\pi(\t)$ \citep{Attias:1999,Bishop:2006}.
VB approximates the posterior by a distribution $q(\t)$ within some tractable class, such as an exponential family, chosen to minimize the 
Kullback-Leibler divergence between $q(\t)$ and $\pi(\t)$.
Most of the VB algorithms in the literature require that
the likelihood $p(y|\t)$ can be computed analytically for any $\t$.

In many applications, however, the likelihood $p(y|\t)$ is intractable
in the sense that it is infeasible to compute $p(y|\t)$ exactly at each value of $\t$,
which makes it difficult to use VB for inference.
For example, in state space models \citep{Durbin:2001}, which are widely used in economics,
finance and engineering, the likelihood is a high dimensional integral over the state variables governed by a Markov process.
\cite{Ghahramani:2000} were the first to use VB for inference in state space models.
However, they only consider the special case in which the time series is segmented into regimes with each regime assumed to follow a linear-Gaussian state space model.
For general state space models, it is still a challenging problem to do inference with VB. 
\cite{Turner:2011} discuss some of the difficulties in applying VB methods to time series models.
Another example of a situation where implementing VB is difficult is approximate Bayesian computation (ABC) 
\citep{Tavare:1997,Marin:2012,Peters:2012}.
ABC methods provide a way of approximating the 
posterior $\pi(\t)$ when the likelihood is difficult to compute but it is possible to simulate data from the model.
We are not aware of any work that uses VB for inference in ABC,
although a closely related technique called Expectation Propagation has been used \citep{Barthelme:2014}.
This paper proposes a VB algorithm that approximates $\pi(\t)$ when the likelihood is intractable.
The only requirement is that the intractable likelihood can be estimated unbiasedly.
The proposed algorithm therefore makes it possible to carry out variational Bayes inference in many statistical models with an intractable likelihood, 
where this was previously impossible.

In many models, by introducing a latent variable $\a$, the joint density $p(y,\a|\t)$ is tractable.
This makes it much easier to work with the joint posterior $p(\t,\a|y)\propto p(\t)p(y,\a|\t)$ rather than the marginal posterior of interest $\pi(\t)$ itself.
In this situation many VB algorithms in the literature approximate 
the joint posterior $p(\t,\a|y)$ by a factorized distribution $q(\t)q(\a)$,
and then use $q(\t)$ as an approximation to $\pi(\t)$.
The main drawback of this approach is that the (usually high) posterior dependence between $\t$ and $\a$ is ignored, which might lead to a poor VB approximation 
\citep{Neville:2014}. Our VB algorithm approximates $\pi(\t)$ directly
with the latent variable $\a$ integrated out and thus overcomes this drawback; see the example in Section \ref{sec:GLMM}.

Section \ref{sec:VBIL} presents our approach, which we call Variational Bayes with Intractable Likelihood (VBIL),
when the likelihood can be estimated unbiasedly.
VBIL transforms the problem of approximating the posterior $\pi(\t)$
into a stochastic optimization problem using a noisy gradient.
It is essential for the success of stochastic optimization algorithms
to have a gradient estimator with a sufficiently small variance. 
Section \ref{sec:variation reduction} describes several techniques, including control variate and quasi-Monte Carlo, 
for variance reduction in estimating the gradient.
This section also discusses the importance of the natural gradient \citep{Amira:1998},
which takes into account the geometry of the variational distribution $q(\t)$ being learned.

Unlike many VB algorithms that are derived on a model-by-model basis and require analytical computation of some model-based expectations,
one of the main advantages of VBIL is that it can be applied to almost all statistical models without requiring an analytical solution to model-based expectations.
The only requirement is that we are able to estimate the intractable likelihood unbiasedly.
The VBIL methodology is therefore generic and widely applicable.
As a by-product, VBIL provides an estimate of the marginal likelihood, which is useful for model choice.

There are several lines of work related to ours in terms of working with an intractable likelihood.
\cite{Beaumont:2003} and \cite{Andrieu:2009} show that Markov chain Monte Carlo simulation 
based on an unbiased estimator of the likelihood is still able to generate samples from the
posterior. This method is known in the literature as Pseudo-Marginal Metropolis-Hasting (PMMH).   
More efficient variants of PMMH, called correlated PMMH and blockwise PMMH, have been proposed recently \citep{Deligiannidis:2015,Tran:2016}.
\cite{Tran:2013} show that importance sampling with the likelihood replaced by its unbiased estimator is still valid for estimating expectations with respect to the posterior, and name their method as Importance Sampling Squared ($\IS^2$).
Also, \cite{Duan:2013} and \cite{Tran:2014} use sequential Monte Carlo for inference based on an unbiased likelihood estimator.
The main advantage of VBIL is that it is several orders of magnitude faster than these competitors.

Section \ref{sec:optimal sigma} studies the link between
the precision of the likelihood estimator to the variance of the VBIL estimator.
This helps to understand how much accuracy is lost when working with an estimated likelihood
compared to the case the likelihood is available.
In this spirit, \cite{Pitt:2012} and \cite{Tran:2013} show that the asymptotic variance of PMMH and $\IS^2$ estimators
increases exponentially with the variance of the likelihood estimator.
Therefore, it is critical for these methods to have a likelihood estimator that is accurate enough.
They show that the variance of the likelihood estimator should be around 1 in order to minimize
the computing time that is needed for the variance of PMMH/$\IS^2$ estimators to have a fixed precision.
For VBIL, we show that the asymptotic variance of VBIL estimators
increases linearly with the variance of the likelihood estimator.
The proposed methodology is therefore useful in cases when only highly
variable estimates of the likelihood are available.
We discuss such a situation in Section \ref{sec:GLMM}
where VBIL works well while its competitors fail.

Several interesting applications of VBIL are presented in Section \ref{sec:applications}.
Section \ref{sec:GLMM} shows the use of VBIL for generalized linear mixed models
and demonstrates the high accuracy of VBIL compared to the existing VB algorithms.
Section \ref{sec:ss models} applies VBIL to Bayesian inference in state space models
and Section \ref{sec:ABC} shows how VBIL can be used for ABC. 
To the best of our knowledge, our paper is the first
to use a VB method in the most general way for Bayesian inference in state space models and ABC.
Another interesting application of VBIL is presented in Section \ref{sec:improve VB},
in which we illustrate that VBIL provides an attractive way to improve the accuracy of VB approximations of marginal posteriors.

\section{Variational Bayes with an intractable likelihood}\label{sec:VBIL}
This section describes the basic form of the proposed VBIL algorithm,
where an unbiased estimator of the likelihood is available.
Denote by $\wh p_N(y|\t)$ an unbiased estimator of the likelihood $p(y|\t)$.
Here $N$ is an algorithmic parameter relating to the precision in estimating the likelihood,
such as the number of samples if the likelihood is estimated by importance sampling
or the number of particles if the likelihood in state space models is estimated by a particle filter.
Using the terminology in \cite{Pitt:2012}, we refer to $N$ as the number of particles. 
Let $z=\log\wh p_N(y|\t)-\log p(y|\t)$, so that $\wh p_N(y|\t)=p(y|\t)e^z$, and denote by $g_N(z|\t)$ the density of $z$. 
Note that $z$ is unknown as we do not know $\log p(y|\t)$ and there is no need to compute $z$ in practice,
but, as will become clear shortly, it is very convenient to work with $z$.
We sometimes write $\wh p_N(y|\t)$ as $\wh p_N(y|\t,z)$.
We note that $\int e^zg_N(z|\t)dz=1$ because of the unbiasedness of the estimator $\wh p_N(y|\t)$.
Define the following density on the extended space $\Theta\times\Bbb{R}$
\beqn
\pi_N(\t,z)=\frac{p(\t)p(y|\t)e^zg_N(z|\t)}{p(y)}=\pi(\t)e^zg_N(z|\t).
\eeqn
This augmented density admits the posterior of interest $\pi(\t)$ as its marginal. 
It is useful to work with $\pi_N(\t,z)$ as 
the high-dimensional vector of random variables involved in estimating the likelihood is transformed into the scalar $z$.
A direct approximation of $\pi_N(\t,z)$ is $\wt q_{\l,N}(\t,z)=q_\lambda(\t)e^zg_N(z|\t)$,
where $q_\lambda(\t)$ is the variational distribution with the variational parameter $\lambda$ to be estimated, 
and then $q_\lambda(\t)$ can be used as an approximation of $\pi(\t)$.
However, it turns out that it is impossible to estimate the gradient of 
the \KL{} divergence between $\wt q_{\l,N}(\t,z)$ and $\pi_N(\t,z)$ as this requires knowing $z$.

We propose instead to approximate $\pi_N(\t,z)$ by $q_{\l,N}(\t,z)=q_\lambda(\t)g_N(z|\t)$.
This augmented density has the attractive features that $q_\l(\t)$ is its marginal for $\t$
and it is possible to estimate the gradient of the \KL{} divergence KL($\lambda$) between $q_{\l,N}(\t,z)$ and $\pi_N(\t,z)$ (c.f. \eqref{eq:derivative KL} below).
Although $q_{\l,N}(\t,z)$ does not provide a good approximation of the posterior marginal of $z$, the latter is not of interest to us.
Furthermore, under Assumptions 1 and 2 given in Section 4,
the minimization of KL($\lambda$) is equivalent to the minimization of the KL divergence between $q_\l(\t)$ and $\pi(\theta)$. 

The \KL{} divergence between $q_{\l,N}(\t,z)$ and $\pi_N(\t,z)$ is 
\beq\label{eq:KL lambda}
\text{KL}(\l)=\int q_\lambda(\t)g_N(z|\t)\log\frac{q_\lambda(\t)g_N(z|\t)}{\pi_N(\t,z)}dzd\t,
\eeq
where we omit to indicate dependence on $N$ for notational convenience.
The gradient of $\text{KL}(\l)$ is
\bea\label{eq:derivative KL}
\nabla_\lambda\text{KL}(\l)&=&\nabla_\lambda\int q_\lambda(\t)g_N(z|\t)\log\frac{q_\lambda(\t)}{p(\t)\wh p_N(y|\t,z)}dzd\t\notag\\
&=&\int\left(\nabla_\lambda[q_\lambda(\t)]g_N(z|\t)\log\frac{q_\lambda(\t)}{p(\t)\wh p_N(y|\t,z)}+q_\lambda(\t)g_N(z|\t)\nabla_\lambda[\log q_\lambda(\t)]\right)dzd\t\notag\\
&=&\int\left(q_\lambda(\t)g_N(z|\t)\nabla_\lambda[\log q_\lambda(\t)]\Big(\log q_\lambda(\t)-\log(p(\t)\wh p_N(y|\t,z)\Big)\right)dzd\t\notag\\
&=&\E_{\t\sim q_\l(\t),z\sim g_N(z|\t)}\left(\nabla_\lambda[\log q_\lambda(\t)]\Big(\log q_\lambda(\t)-\log(p(\t)\wh p_N(y|\t,z)\Big)\right).
\eea
Here, we have used the facts that $\nabla_\lambda[q_\lambda(\t)]=q_\lambda(\t)\nabla_\lambda[\log q_\lambda(\t)]$ and that $\E(\nabla_\lambda[\log q_\lambda(\t)])=0$.
It follows from \eqref{eq:derivative KL} that, by generating $\t\sim q_\lambda(\t)$ and $z\sim g_N(z|\t)$,  it is straightforward to obtain an unbiased estimator $\wh{\nabla_\l\text{KL}}(\l)$ of the gradient $\nabla_\l\text{KL}(\l)$.
Therefore, we can use stochastic optimization to optimize $\text{KL}(\lambda)$.
We note that the unknown $z$ is implicitly generated when the unbiased likelihood estimator $\wh p_N(y|\t)=\wh p_N(y|\t,z)$ is computed. In practice, $z$ is never dealt with explicitly and it only plays
a theoretical role in the mathematical derivations.
The basic algorithm is as follows
\begin{algorithm}\label{algorithm 1}
\begin{itemize}
  \item Initialize $\l^{(0)}$ and stop the following iteration if the stopping criterion is met.
  \item For $t=0,1,...$, compute $\l^{(t+1)}=\l^{(t)}-a_t\wh{\nabla_\l\text{KL}}(\l^{(t)})$. 
\end{itemize}
\end{algorithm}
We will refer to this algorithm as Variational Bayes with Intractable Likelihood (VBIL).
The sequence $\{a_t\}$ should satisfy $a_t>0$, $\sum_t a_t=\infty$ and $\sum_t a_t^2<\infty$.
We choose $a_t=1/(1+t)$ in this paper.
It is also possible to train $a_t$ adaptively.

It is important to note that each iteration is parallelizable,
as the gradient ${\nabla_\l\text{KL}}(\l)$ is estimated by independent samples
from $q_{\l,N}(\t,z)$. 

\subsection{Stopping criterion and marginal likelihood estimation} 
An easy-to-implement stopping rule is to stop the updating
procedure if the change between $\lambda^{(t+1)}$ and $\lambda^{(t)}$, e.g. in terms of the Euclidean distance, is less than some threshold $\epsilon$ \citep{Ranganath:2014}.
However, it is difficult to select $\epsilon$ as such a distance depends on the scales and the length of the vector $\l$.
It is easy to show that
\beq
\log p(y) = \int\log\left(\frac{p(\t)\wh p_N(y|\t,z)}{q_\l(\t)}\right)q_{\l,N}(\t,z)dzd\t+\Kl(\l)\geq \text{LB}(\l),
\eeq
where
\bea\label{eq:lb}
\text{LB}(\l) &=& \int\log\left(\frac{p(\t)\wh p_N(y|\t,z)}{q_\l(\t)}\right)q_{\l,N}(\t,z)dzd\t\notag\\
&=&\E_{\t,z}[\log p(\t)-\log q_\l(\t)+\log\wh p_N(y|\t,z) ]
\eea
is the lower bound on the log of the marginal likelihood $\log\;p(y)$.
This lower bound after convergence can be used as an approximation to $\log\;p(y)$,
which is useful for model selection purposes.
The expectation of the first two terms in \eqref{eq:lb} can be computed analytically,
while the last term can be estimated unbiasedly by samples from $q_{\l,N}(\t,z)$.
However, in our experience, estimating the entire expectation \eqref{eq:lb}
based on samples from $q_{\l,N}(\t,z)$ leads to a smaller variance.  
Denote by $\wh{\text{LB}}(\l)$ the resulting unbiased estimate of $\text{LB}(\l)$.
Although $\text{LB}(\l)$ is strictly non-decreasing over iterations,
its sample estimate $\wh{\text{LB}}(\l)$ might not be.
To account for this, we suggest to stop the updating procedure if the change in an averaged value of the lower bounds
over a window of $M$ iterations, $\overline {\text{LB}}(\l_t)=(1/M)\sum_{k=1}^M \wh{\text{LB}}(\l_{t-k+1})$,
is less than some threshold $\epsilon$.
At convergence, the values $\text{LB}(\l_t)$ stay the same,
therefore $\overline {\text{LB}}(\l_t)$ will average out the noise in $\wh{\text{LB}}(\l_t)$ and is stable. 
Furthermore, we suggest to replace $\wh{\text{LB}}(\l)$ by a scaled version of it,
$\wh{\text{LB}}(\l)/n$ with $n$ the size of the dataset such as the number of observations.
The scaled lower bound is more or less independent of the size of the dataset (c.f., Figure \ref{f:lower bound figure}).
We set $M=5$ and $\epsilon=10^{-5}$ in this paper.

\section{Variance reduction and natural gradient}\label{sec:variation reduction}
As is typical of stochastic optimization algorithms,
the performance of Algorithm \ref{algorithm 1} depends greatly on the variance of the noisy gradient.
This section describes several techniques for variance reduction.

\subsection{Control variate}\label{sec:control variate}
Denote $\wh h(\t,z)=\log\;(p(\t)\wh p_N(y|\t,z))$ for notational simplicity.
Let $\t_s\sim q_\l(\t)$ and $z_s\sim g_N(z|\t_s)$, $s=1,...,S$, be $S$ samples from the variational distribution $q_{\l,N}(\t,z)$. A naive estimator of the $i$th element of $\nabla_\lambda\text{KL}(\l)$ is
\beq\label{eq: naive KL estimate}
\wh{\nabla_{\lambda_i}\Kl}(\l)^{\text{naive}}=\frac1S\sum_{s=1}^S\nabla_{\lambda_i}[\log q_\lambda(\t_s)]\big(\log q_\l(\t_s)-\wh h(\t_s,z_s)\big),
\eeq
whose variance is often too large to be useful. 
For any number $c_i$, consider   
\beq\label{eq: reduced var KL estimate}
\wh{\nabla_{\lambda_i}\Kl}(\l)=\frac1S\sum_{s=1}^S\nabla_{\lambda_i}[\log q_\lambda(\t_s)](\log q_\l(\t_s)-\wh h(\t_s,z_s)-c_i),
\eeq
which is still an unbiased estimator of $\nabla_{\lambda_i}\text{KL}(\l)$ since $\E(\nabla_{\lambda}[\log q_\lambda(\t)])=0$,
whose variance can be greatly reduced by an appropriate choice of $c_i$.
Similar ideas are considered in the literature, see \cite{Paisley:2012}, \cite{Nott:2012} and \cite{Ranganath:2014}.
The variance of $\wh{\nabla_{\lambda_i}\Kl}(\l)$ is 
\bean
\V(\wh{\nabla_{\lambda_i}\Kl}(\l))&=&\frac1S\V\Big(\nabla_{\lambda_i}[\log q_\lambda(\t)]\big(\log q_\l(\t)-\wh h(\t,z)-c_i\big)\Big)\\
&=&\frac1S\V\Big(\nabla_{\lambda_i}[\log q_\lambda(\t)]\big(\log q_\l(\t)-\wh h(\t,z)\big)\Big)\\
&&\phantom{cccc}-\frac{2c_i}{S}\cov\Big(\nabla_{\lambda_i}[\log q_\lambda(\t)]\big(\log q_\l(\t)-\wh h(\t,z)\big),\nabla_{\lambda_i}[\log q_\lambda(\t)]\Big)  \\
&&\phantom{cccc}+\frac{c_i^2}{S}\V\Big(\nabla_{\lambda_i}[\log q_\lambda(\t)]\Big).
\eean
The optimal $c_i$ that minimizes this variance is 
\beq\label{eq:optimal c_i}
c_i=\cov\Big(\nabla_{\lambda_i}[\log q_\lambda(\t)]\big(\log q_\l(\t)-\wh h(\t,z)\big),\nabla_{\lambda_i}[\log q_\lambda(\t)]\Big)\Big/\V\Big(\nabla_{\lambda_i}[\log q_\lambda(\t)]\Big).
\eeq
Then
\beqn
\V(\wh{\nabla_{\lambda_i}\Kl}(\l))=\V(\wh{\nabla_{\lambda_i}\Kl}(\l)^{\text{naive}})(1-\rho^2_i)\leq\V(\wh{\nabla_{\lambda_i}\Kl}(\l)^{\text{naive}}),
\eeqn
where $\rho_i$ is the correlation between $\nabla_{\lambda_i}[\log q_\lambda(\t)]\big(\log q_\l(\t)-\wh h(\t,z)\big)$ and $\nabla_{\lambda_i}[\log q_\lambda(\t)]$.
Often, $\rho_i^2$ is very close to 1.

We estimate the numbers $c_i$ by samples $(\t_s,z_s)\sim q_{\l,N}(\t,z)$ as in \eqref{eq:optimal c_i}.
In order to ensure the unbiasedness of the gradient estimator, the samples used to estimate $c_i$ must be independent of the samples used to estimate the gradient.
In practice, the $c_i$ can be updated sequentially as follows.
At iteration $t$, we use the $c_i$ computed in the previous iteration $t-1$, i.e. based on the samples from $q_{\l^{(t-1)},N}(\t,z)$,
to estimate the gradient $\wh{\nabla_\l\text{KL}}(\l^{(t)})$,
which is estimated using new samples from $q_{\l^{(t)},N}(\t,z)$.
We then update the $c_i$ using this new set of samples.
By doing so, the unbiasedness is guaranteed while no extra samples are needed in updating the numbers $c_i$.

The gradient in the form of \eqref{eq:derivative KL} can be written as a sum of two terms,
where the first term $\E_{\t\sim q_\l(\t)}(\nabla_\lambda[\log q_\lambda(\t)]\log q_\lambda(\t))$ can be in most cases computed analytically.
However, as pointed out by a referee, this term should be estimated using the same samples of $\t$ as we do in \eqref{eq: reduced var KL estimate}. Doing so helps to reduce the noises in estimating the gradient. This is because the first term plays the role of a control variate. 
This phenomenon is discussed in detail in \cite{Salimans:2013}.

\subsection{Natural gradient}\label{sec:natural gradient}
Intuitively, a different learning rate should be used for each scale in the gradient vector.
That is, the traditional gradient vector $\nabla_\l\Kl(\l)$ should be multiplied by an appropriate scale matrix.
It is well-known that the traditional gradient defined on the Euclidean space
does not adequately capture the geometry of the variational distribution $q_\l(\t)$ \citep{Amira:1998}.
A small Euclidean distance between $\l$ and $\l'$ does not necessarily mean a small \KL{} divergence between $q_\l(\t)$ and $q_{\l'}(\t)$.
\cite{Amira:1998} defines the natural gradient as 
\beq\label{eq:natural gradient}
\nabla_{\lambda}\Kl(\l)^{\text{natural}} = I_F(\l)^{-1}\nabla_\l\Kl(\l),
\eeq
with $I_F(\l)$ the Fisher information matrix,
and suggests using the natural gradient as an efficient alternative to the traditional gradient. See also \cite{Hoffman:2013}.

If the variational distribution $q_\l(\t)$ has the exponential family form
\beq\label{eq:exponential form}
q_\lambda(\theta)=\exp(T(\t)'\lambda-Z(\lambda)),
\eeq 
with $T(\t)$ the vector of sufficient statistics
and $\lambda$ the vector of natural parameters, 
then $I_F(\l)=\cov_{q_\l}\big(T(\t),T(\t)\big)$ is computed analytically.

The use of the natural gradient in VB algorithms is considered, among others, by \cite{Honkela:2010}, \cite{Hoffman:2013} and \cite{Salimans:2013}.  
We demonstrate the importance of the natural gradient using a simple example
where the likelihood is available.
We consider a model where the data $y_i\sim B(1,\t)$ - the Bernoulli distribution with probability $\theta$.
We generate $n=200$ observations $y_i$ from $B(1,\t=0.3)$ and obtain $k=\sum_iy_i=57$.
We use a uniform prior on $\theta$. Then, the posterior $p(\theta|y)$ is Beta($k+1,n-k+1$).
The variational distribution $q_\l(\t)$ is chosen to be Beta($\alpha,\beta$) which belongs to the exponential family
with the natural parameter $\lambda=(\alpha,\beta)'$. The Fisher information matrix $I_F(\lambda)$ is 
\beqn
I_F(\l) = \begin{bmatrix}
\psi_1(\alpha)-\psi_1(\alpha+\beta)&\psi_1(\alpha+\beta)\\
\psi_1(\alpha+\beta)&\psi_1(\beta)-\psi_1(\alpha+\beta)
\end{bmatrix},
\eeqn 
where $\psi_1(x)=\nabla_{xx}[\log\Gamma(x)]$ is the {\it trigamma} function.
In this simple example, the gradient $\nabla_\lambda\text{KL}(\l)$ in \eqref{eq:derivative KL}
can be computed analytically 
\[\nabla_\lambda\text{KL}(\l)=I_F(\l)\l-H(\l)\]
with 
\beqn
H(\l)=\begin{bmatrix}
k\psi_1(\alpha)-n\psi_1(\alpha+\beta)\\
(n-k)\psi_1(\beta)-n\psi_1(\alpha+\beta)
\end{bmatrix}.
\eeqn 
Using the traditional gradient, the update in Algorithm \ref{algorithm 1} is
\beqn
\l^{(t+1)} = \l^{(t)}-a_t\Big(I_F(\l^{(t)})\l^{(t)}-H(\l^{(t)})\Big).
\eeqn
Using the natural gradient, the update is
\beqn
\l^{(t+1)} = (1-a_t)\l^{(t)}+a_tI_F(\l^{(t)})^{-1}H(\l^{(t)}).
\eeqn
Figure \ref{f:binomial_example} plots the densities of the exact posterior $\pi(\t)$ and the variational distributions $q_\l(\t)$ estimated
by the VBIL using the traditional gradient and the natural gradient, with two different random initializations.
The figure shows that the VBIL algorithm using the natural gradient is superior to using the traditional gradient.
Furthermore, the VBIL algorithm based on the natural gradient is insensitive to the initial $\l^{(0)}$.

\begin{figure}[h]
\centering
\includegraphics[width=1\textwidth,height=.3\textheight]{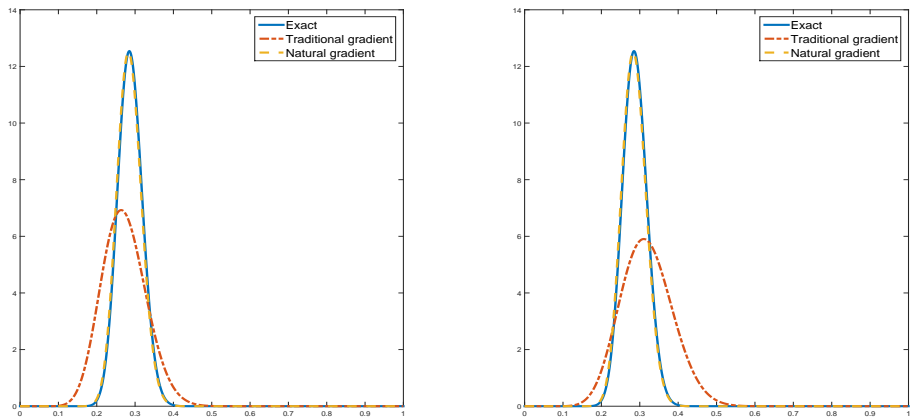}
\caption{Plots of the densities of the exact posterior and the variational approximation estimates, at convergence, with two different starting values $\l^{(0)}$ at random.}
\label{f:binomial_example}
\end{figure}

\subsection{Factorized variational distribution}
Often, the variational distribution $q_\l(\t)$ is factorized into $K$ factors
\beq\label{eq:factorization}
q_\l(\t) = q_{\l^{(1)}}(\t^{(1)})...q_{\l^{(K)}}(\t^{(K)}).
\eeq
Then, each factor $q_{\l^{(k)}}(\t^{(k)})$ is updated separately
and the variance of the estimate of the corresponding gradient can be reduced.
\cite{Salimans:2013} and \cite{Ranganath:2014} consider variance reduction using factorization.
Denote by $\wh h_k(\t,z)$ the terms in $\wh h(\t,z)$ that involve only $\t^{(k)}$ and $z$.
From \eqref{eq:derivative KL}, and noting that $\E_{\t,z}(\nabla_{\l^{(k)}}[\log q_{\l^{(k)}}(\t^{(k)})])=0$,
the traditional gradient corresponding to factor $k$ is
\beq\label{eq:Factorized gra 1}
\nabla_{\l^{(k)}}\text{KL}(\l)=\E_{\t\sim q_\l(\t),z\sim g_N(z|\t)}\left(\nabla_{\l^{(k)}}[\log q_{\l^{(k)}}(\t^{(k)})]\Big(\log q_{\l^{(k)}}(\t^{(k)})-\wh h_k(\t,z)\Big)\right).
\eeq
In the case $q_{\l^{(k)}}(\t^{(k)})=\exp(T_k(\t^{(k)})'\lambda^{(k)}-Z_k(\lambda^{(k)}))$ belongs to an exponential family, 
the natural gradient corresponding to factor $k$ is
\beq\label{eq:Factorized gra 2}
\nabla_{\l^{(k)}}\text{KL}(\l)^\text{natural}=I_{F,k}(\l^{(k)})^{-1}\nabla_{\l^{(k)}}\text{KL}(\l),
\eeq
where $I_{F,k}(\l^{(k)})$ is the Fisher information matrix of distribution $q_{\l^{(k)}}(\t^{(k)})$.

Estimating the gradient using \eqref{eq:Factorized gra 1} has less variation
than using \eqref{eq:derivative KL}.
Intuitively, this is because the variation due to terms not involving $\theta^{(k)}$ has been removed.
This is also explained in \cite{Ranganath:2014} as a Rao-Blackwellization effect.

\subsection{Randomised quasi-Monte Carlo}
Numerical integration using quasi-Monte Carlo (QMC) has been proven efficient in many applications.
Instead of generating uniform random numbers $U(0,1)$ as in plain Monte Carlo methods, 
QMC generates deterministic sequences that are more evenly distributed in (0,1)
in the sense that they minimise the so-called star-discrepancy.
\cite{Dick:2010} provide an extensive background on QMC.
It is shown that, in many cases, QMC integration achieves a better convergence rate than Monte Carlo integration.
In this paper, we use randomised quasi-Monte Carlo (RQMC) as VBIL requires an unbiased estimator of the gradient.
By introducing a random element into a QMC sequence,
RQMC preserves the low-discrepancy property 
and, at the same time, leads to unbiased estimators \citep{Owen:1997,Dick:2010}.

Here, we use RQMC to sample $\t\sim q_\l(\t)$. 
This will help to reduce the variance of the noisy gradient if the dimension of $\t$ is not too high.
Of course, one can also use RQMC in the likelihood estimation, but given some time constraint 
we do not pursue this idea in this paper.

\section{The effect of estimating the likelihood}\label{sec:optimal sigma}
This section studies the effect of the variance of the noisy likelihood
on the VBIL estimators, and provides guidelines for selecting the number of particles $N$.
A large $N$ gives a precise likelihood estimate and therefore an accurate estimate of $\l$,
but at a greater computational cost. A small $N$ leads to a large variance of the
likelihood estimator, so a larger number of iterations is needed for the procedure to settle down.
It is therefore useful in practice to have some guidelines for selecting $N$.

In order to understand the effect of estimating the likelihood,
we follow \cite{Pitt:2012} and make the following assumption.
\begin{assumption} \label{ass: assumption 1}
There is a function $\gamma^2(\t)>0$ such that $\E(z|\t)=-\frac{\gamma^2(\t)}{2N}$
and $\V(z|\t)=\frac{\gamma^2(\t)}{N}$.
\end{assumption}
More precisely, \cite{Pitt:2012} assume further that $z\sim\N(-\frac{\gamma^2(\t)}{2N},\frac{\gamma^2(\t)}{N})$
in order to derive a theory for selecting an optimal $N$.
This assumption is justified in \cite{Tran:2013} and \cite{Doucet:2013} making use of the unbiasedness of the likelihood estimate.
The reason that the mean of $z$ is $-\frac12$ times its variance is because $\E(e^z)=1$ in order for the likelihood estimator to be unbiased. 

\begin{assumption} \label{ass: assumption 2}
For a given $\sigma^2>0$, let $N$ be a function of $\t$ and $\s^2$ such that
$\V(z|\t)\equiv\s^2$, i.e. $N=N_{\s^2}(\t)=\gamma^2(\t)/\s^2$. Then $\E(z|\t)=-\frac{\s^2}{2}$ and $\V(z|\t)=\s^2$.
\end{assumption}

Suppose that the equation $\nabla_\l\Kl(\l)=0$, with $\Kl(\l)$ in \eqref{eq:KL lambda},
has the unique solution $\l^*$. 
Let $\wh \l_M$ be the estimator of $\l^*$ obtained by Algorithm 1 or 2 after $M$ iterations,
and $\wt \l_M$ be the corresponding estimator obtained when the exact likelihood is available.
Denote $\zeta_{*}(\t)=\nabla_\l[\log q_\l(\t)]\big|_{\l=\l^*}$ and denote 
by $\E_*(.)$ and $\V_*(.)$ the expectation and variance operators with respect to $q_{\l^*}(\t)$.
For simplicity, we consider the case that $\l$ is scalar;
the case with a multivariate $\l$ can be obtained using Theorem 5 of \cite{Sacks:1958}.
We obtain the following results whose proof is in the Appendix.

\begin{theorem}\label{eq:effect of sig2} 
Suppose that Assumptions 1 and 2 are satisfied,
and that the regularity conditions in Theorem 1 of \cite{Sacks:1958} hold.\\
(i) Then,
\beq\label{eq:CLT}
\sqrt{M}(\wh \l_{M}-\l^*)\stackrel{d}{\to} \N\Big(0,c_{\l^*}\V\big(\wh{\nabla_{\l}\Kl}(\l^*)\big)\Big),\;\text{as}\;M\to\infty,
\eeq
where $c_{\l^*}$ is a positive constant that depends only on geometric properties of the function $\nabla_{\l}\Kl(\l^*)$
and is independent of the random variables involving in estimating $\nabla_{\l}\Kl(\l^*)$, i.e. $c_{\l^*}$ is independent of $\s^2$.\\  
(ii) Let $\s^2_\text{asym}(\wh \l_M)=c_{\l^*}\V\big(\wh{\nabla_{\l}\Kl}(\l^*)\big)$ be the asymptotic variance of $\wh \l_M$ as $M\to\infty$.
Similarly, let $\s^2_\text{asym}(\wt \l_M)$ be the asymptotic variance of $\wt \l_M$.
Then, 
\beq
\s^2_\text{asym}(\wh \l_M)=\s^2_\text{asym}(\wt \l_M)+\s^2\tau(\l^*,S),
\eeq
where $\tau(\l^*,S)=c_{\l^*}\V_{*}\big\{\zeta_{*}(\t)\big\}/S$ if the noisy traditional gradient is used,
and $\tau(\l^*,S)=c_{\l^*}I_F(\l^*)^{-1}\V_{*}\big\{\zeta_{*}(\t)\big\}I_F(\l^*)^{-1}/S$ if the noisy natural gradient in \eqref{eq:natural gradient} is used.
\end{theorem}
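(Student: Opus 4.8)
The plan is to recognise Algorithm~\ref{algorithm 1} as a Robbins--Monro recursion whose mean field is $\nabla_\l\Kl(\l)$, to obtain \eqref{eq:CLT} from the stochastic-approximation central limit theorem of \cite{Sacks:1958}, and then to derive the variance split in (ii) by a conditional (law-of-total-variance) computation of $\V(\wh{\nabla_\l\Kl}(\l^*))$.

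For part (i), the update $\l^{(t+1)}=\l^{(t)}-a_t\wh{\nabla_\l\Kl}(\l^{(t)})$ with $a_t=1/(1+t)$ is exactly a Robbins--Monro scheme with unbiased noisy observations of $\nabla_\l\Kl(\l)$, so Theorem~1 of \cite{Sacks:1958} applies once its regularity conditions are checked. The first step I would take is to show that the mean field does not change when the likelihood is estimated. Writing $\wh h(\t,z)=\log p(\t)+\log p(y|\t)+z$ and integrating the integrand of \eqref{eq:derivative KL} over $z\sim g_N(z|\t)$, Assumption~2 gives $\E(z|\t)=-\s^2/2$, a constant in $\t$; since $\E_{q_\l}(\nabla_\l[\log q_\l(\t)])=0$ this additive constant drops out, so $\nabla_\l\Kl(\l)$ equals the gradient one would compute from the exact likelihood and, in particular, is independent of $\s^2$. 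Hence $\l^*$ and the slope $\nabla_\l^2\Kl(\l^*)$ are common to the estimated and exact problems and free of $\s^2$. Reading off the Sacks limit then gives \eqref{eq:CLT} with $c_{\l^*}$ a function only of $\nabla_\l^2\Kl(\l^*)$ and the step-size constant, i.e.\ a geometric quantity independent of the sampling noise, which is the last claim of (i).

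For part (ii), $\wh{\nabla_\l\Kl}(\l^*)$ is an average of $S$ i.i.d.\ copies of the single-draw gradient $G(\t,z)=\zeta_*(\t)\big(\log q_{\l^*}(\t)-\wh h(\t,z)-c\big)$, so $\V(\wh{\nabla_\l\Kl}(\l^*))=\V(G)/S$ and I would expand $\V(G)$ by conditioning on $\t$. Given $\t$ only $z$ is random and $G$ is affine in $z$ with slope $-\zeta_*(\t)$, whence $\V(G\mid\t)=\zeta_*(\t)^2\,\V(z\mid\t)=\s^2\zeta_*(\t)^2$ by Assumption~2, while $\E(G\mid\t)$ matches the exact-likelihood integrand up to the $O(\s^2)$ shift $\tfrac{\s^2}{2}\zeta_*(\t)$. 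The law of total variance then yields $\V(G)=\s^2\E_*(\zeta_*(\t)^2)+\V_\t[\E(G\mid\t)]$, where the second term reproduces $S\,\s^2_\text{asym}(\wt\l_M)/c_{\l^*}$ to leading order in $\s^2$. Because the score has mean zero, $\E_*(\zeta_*(\t))=0$ and hence $\E_*(\zeta_*(\t)^2)=\V_*\{\zeta_*(\t)\}$; dividing by $S$ and multiplying through by $c_{\l^*}$ (the same constant for both estimators, by part (i)) gives the traditional-gradient form $\tau(\l^*,S)=c_{\l^*}\V_*\{\zeta_*(\t)\}/S$.

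The natural-gradient case in \eqref{eq:natural gradient} is handled by the same argument applied to $I_F(\l^*)^{-1}G(\t,z)$: the recursion premultiplies the noise by $I_F(\l^*)^{-1}$, so the conditional-variance contribution becomes $\s^2 I_F(\l^*)^{-1}\E_*(\zeta_*(\t)^2)I_F(\l^*)^{-1}$, giving the sandwiched form of $\tau$ after using $\E_*(\zeta_*)=0$ again. I expect the main obstacle to be part (i): verifying that our gradient estimator meets the hypotheses of \cite{Sacks:1958} near $\l^*$ (finiteness and continuity of the noise variance, and the slope condition that secures the $\sqrt{M}$ rate), and justifying rigorously that the $O(\s^2)$ mean shift $\tfrac{\s^2}{2}\zeta_*(\t)$ perturbs $\V_\t[\E(G\mid\t)]$ only at higher order in $\s^2$, so that the inflation of the asymptotic variance is genuinely linear in $\s^2$ as claimed.
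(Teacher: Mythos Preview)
Your treatment of part (i) matches the paper's: under Assumptions~1 and~2 one has $\Kl(\l)=\Kl(q_\l\|\pi)+\s^2/2$, so the mean field and hence $\l^*$ and the Sacks constant $c_{\l^*}$ are the same as in the exact-likelihood problem and free of $\s^2$. The paper likewise defers verification of Sacks' regularity conditions to the assumption in the theorem statement, so your first flagged obstacle is not one you are expected to resolve.

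For part (ii) your law-of-total-variance decomposition, conditioning on $\t$, is precisely the route the paper takes. The gap is that you do not compute the control-variate constant $c$, and this is what generates your residual worry about higher-order terms. The paper computes $c$ from \eqref{eq:optimal c_i} \emph{with the noise $z$ present} and obtains
\[
c=\frac{\E_{\t,z}\{\zeta_*(\t)^2(\log q_{\l^*}(\t)-h(\t)-z)\}}{\E_*\{\zeta_*(\t)^2\}}=\wt c+\frac{\s^2}{2},
\]
where $\wt c$ is the optimal control variate for the exact-likelihood gradient. With this, your conditional mean becomes
\[
\E(G\mid\t)=\zeta_*(\t)\Big(\log q_{\l^*}(\t)-h(\t)+\tfrac{\s^2}{2}-c\Big)=\zeta_*(\t)\big(\log q_{\l^*}(\t)-h(\t)-\wt c\big),
\]
which is \emph{exactly} the exact-likelihood integrand, not merely to leading order. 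Consequently $\V_\t[\E(G\mid\t)]$ equals $S\,\V(\wt{\nabla_\l\Kl}(\l^*))$ on the nose, and the identity $\s^2_\text{asym}(\wh\l_M)=\s^2_\text{asym}(\wt\l_M)+\s^2\tau(\l^*,S)$ is exact rather than asymptotic in $\s^2$. Once you track how $c$ shifts with $\s^2$, the ``$O(\s^2)$ perturbation'' you flagged vanishes and no higher-order justification is needed.
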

These results show that the variance of VBIL estimators increases linearly with $\s^2$.
For PMMH and $\IS^2$ estimators, \cite{Pitt:2012} and \cite{Tran:2013} show that their variances increase exponentially with $\s^2$.
This means that VBIL is useful in cases where only a rough estimate of the likelihood is available, or it is expensive to obtain an accurate estimate of the likelihood.

We now discuss the issue of selecting $\s^2$.
We note that under Assumption 2, $N$ is tuned depending on $\t$ as $N=N_{\s^2}(\t)=\gamma^2(\t)/\s^2$,
so the time to compute the likelihood estimate $\wh p_N(y|\t)$ is proportional to $1/\s^2$.
Then, \cite{Pitt:2012} and \cite{Tran:2013} show that,
for the PMMH and $\IS^2$ methods, the optimal $\s^2$ that gives an optimal trade-off between 
the CPU time and the variance of the estimators is 1.
For VBIL, the computing time can be defined as
\beq
\text{CT}(\s^2)=\frac{\s^2_\text{asym}(\wh \l_M)}{\s^2}=\frac{\s^2_\text{asym}(\wt \l_M)}{\s^2}+\tau(\l^*,S),
\eeq
where neither $\s^2_\text{asym}(\wt \l_M)$ nor $\tau(\l^*,S)$ depends on $\s^2$.
These results suggest that $\s^2$ should be set to a large value,
as long as it is not too large for the stochastic search procedure in Algorithms 1 and 2 to converge.

\section{Applications}\label{sec:applications}
\subsection{Application to generalized linear mixed models and panel data models}\label{sec:GLMM}
Generalized linear mixed models (GLMM) \citep[see, e.g.][]{Fitzmaurice:2011},
also known as panel data models, use a vector of random effects $\a_i$ to account for the dependence 
between the observations $y_i=\{y_{ij},j=1,...,n_i\}$ measured on the same individual $i$.
Given the random effects $\a_i$, the conditional density $p(y_i|\t,\a_i)$ belongs to an exponential family.
The joint likelihood function of the model parameters $\t$ and the random effects $\a=(\a_1,...,\a_n)$, is tractable
\beqn
p(y,\a|\t)=\prod_{i=1}^np(\a_i|\t)p(y_i|\t,\a_i).
\eeqn     
Typically in the VB literature the joint posterior $p(\t,\a|y)\propto p(\t)p(y,\a|\t)$ is approximated by a variational distribution of the form 
$q(\t)q(\a)$,
and then $q(\t)$ is used as an approximation to the marginal posterior $p(\t|y)$.
For example, \cite{Tan:2013} take this approach but use partially non-centered parameterizations to reduce dependence between parameter blocks.  
\cite{Ormerod:2012} consider frequentist estimation of $\theta$, but using VB methods to integrate out $\alpha$.  
As discussed in the introduction, factorization of the VB distribution generally ignores
the posterior dependence between $\t$ and $\a$, which often leads to underestimating 
the variance in the posterior distribution of $\t$.
Below, we refer to such a VB method as classical VB.

The likelihood, $p(y|\t)=\prod_{i=1}^np(y_i|\t)$ with $p(y_i|\t)=\int p(y_i|\theta,\a_i)p(\a_i|\theta)d\alpha_i$
an integral over the random effects $\a_i$,
is in most cases analytically intractable but can be easily estimated unbiasedly using importance sampling.
Let $h_i(\alpha_i|y,\t)$ be an importance density for $\a_i$.
The integral $p(y_i|\t)$ is estimated unbiasedly by
\beq\label{e:IS_estimator}
\wh p_{N_i}(y_i|\t)=\frac{1}{N_i}\sum_{j=1}^{N_i}w_i(\a_i^{(j)},\t),\;\;w_i(\a_i^{(j)},\t)=\frac{p(y_i|\a_i^{(j)},\t)p(\a_i^{(j)}|\t)}{h_i(\a_i^{(j)}|y,\t)},\;\;\a_i^{(j)}\stackrel{iid}{\sim}h_i(\cdot|y,\t).
\eeq
It is possible to use different $N_i$ for each $p(y_i|\t)$.
Hence, $\widehat{p}_{N}(y|\theta) =\prod_{i=1}^{n}\widehat{p}_{N_i}(y_{i}|\theta)$ 
is an unbiased estimator of the likelihood $p(y|\t)$.
The variance of $z=\log\;\wh p_N(y|\t)-\log\;p(y|\t)$ is
\beq\label{eq:var_z}
\V(z|\t) = \V(\log\widehat{p}_{N}(y|\theta))=\sum_{i=1}^n\V(\log\wh p_{N_i}(y_i|\t)),
\eeq
which can be estimated by $\wh\V(z|\t) = \sum_{i=1}^n\wh\V(\log\;\wh p_{N_i}(y_i|\t))$ with
\beq\label{eq:var_z_i estimate}
\wh\V(\log\wh p_{N_i}(y_i|\t))=\frac{\wh\gamma_i(\t)}{N_i},\;\;\wh\gamma_i(\t) = \frac{N_i\sum_{j=1}^{N_i}w_i(\a_i^{(j)},\t)^2}{\big(\sum_{j=1}^{N_i}w_i(\a_i^{(j)},\t)\big)^2}-1.
\eeq
Given a fixed $\s^2$, it is therefore straightforward to target $\V(z|\t)=\s^2$ by selecting 
$N_i$ such that $\wh\V(\log\;\wh p_{N_i}(y_i|\t))\approx \s^2/n$.
 
\subsubsection*{Six City data}
We now illustrate the VBIL algorithm using the Six City data in \cite{Fitzmaurice:1993}.
The data consist of binary responses $y_{ij}$ which indicate the wheezing status (1 if wheezing, 0
if not wheezing) of the $i$th child at time-point $j$, $i = 1, . . . , 537$ and $j = 1,...,4$.
Covariates are the age of the child at time-point $j$, centered at 9 years, and the maternal smoking status (0 or 1).
We consider the following logistic regression model with a random intercept
\begin{eqnarray*}
p(y_{ij}|\beta,\a_i)&=&\text{Binomial}(1,p_{ij}),\\
\text{logit}(p_{ij})&=&\beta_1+\beta_2\text{Age}_{ij}+\beta_3\text{Smoke}_{ij}+\a_{i},\;\;\;\a_i\sim \N(0,\tau^2).
\end{eqnarray*}
The model parameters are $\t=(\beta,\tau^2)$.
We use a normal prior $\N(0,50I_3)$ for $\beta$ and a Gamma$(1,0.1)$ prior for $\tau^2$. 

\begin{figure}[h]
\centering
\includegraphics[width=1\textwidth,height=.3\textheight]{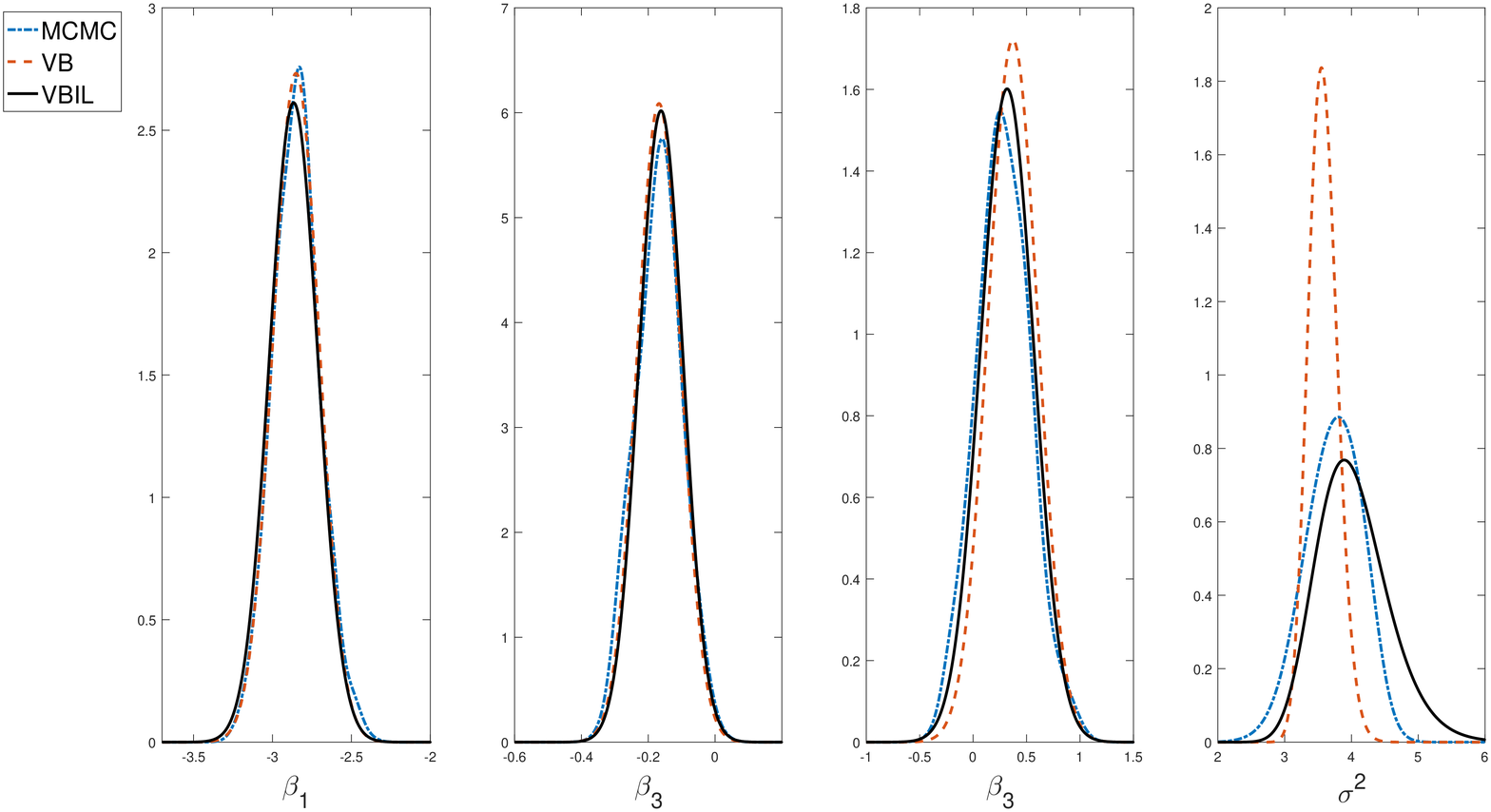}
\caption{Application to GLMM: Six City data}
\label{f:demonstration figure}
\end{figure}

\begin{figure}[h]
\centering
\includegraphics[width=1\textwidth,height=.3\textheight]{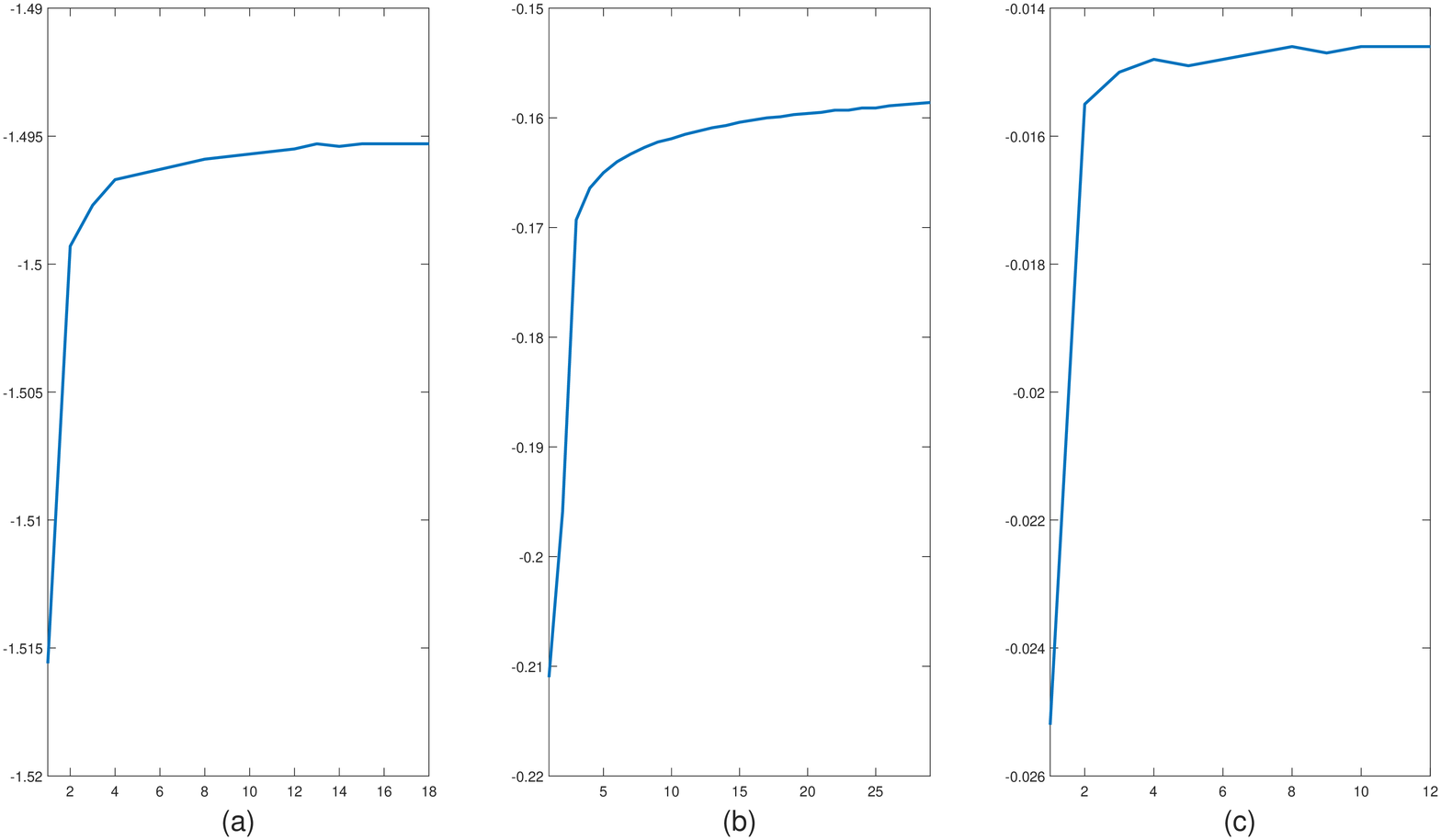}
\caption{Plots of scaled lower bounds over the iterations: (a) Six City example, (b) state space example,
(c) ABC example}
\label{f:lower bound figure}
\end{figure}

We use the variational distribution $q_\l(\t)=q(\beta)q(\tau^2)$,
where $q(\beta)$ is a $d=3-$variate normal $\N(\mu,\Sigma)$ and $q(\tau^2)$ is an inverse gamma distribution.
We then run Algorithm 2, see the Appendix for the detail, with $S=1000$ samples to estimate the gradient. 
The likelihood is estimated as in \eqref{e:IS_estimator} with 
the natural sampler $h_i(\a_i|y,\t)=p(\a_i|\t)$,
which is the normal distribution $\N(0,\tau^2)$ in our case.
The $\s^2$ in Section \ref{sec:optimal sigma} is set to $4$,
which on average requires $\bar N=\sum N_{i}/n=124$ particles.
Using a larger $\s^2$ will lead to too small $N_i$ that makes the estimate in \eqref{eq:var_z_i estimate} unreliable. 
Figure \ref{f:lower bound figure}(a) plots the scaled lower bounds over the iterations.

We compare the performance of the classical VB and VBIL algorithms to the pseudo-marginal MCMC simulation \citep{Andrieu:2009}.
We set $\s^2=1$ as suggested in \cite{Pitt:2012}.
The MCMC chain, based on the adaptive random walk Metropolis-Hastings algorithm in \cite{Haario:2001},
consists of 20000 iterates with another 10000 iterates used as burn-in.

Figure \ref{f:demonstration figure} plots the classical VB estimates (dashed line), 
MCMC estimates (dotted line) and the VBIL estimates (solid line) of the marginal posteriors $p(\beta_i|y)$ and $p(\tau^2|y)$. 
The MCMC density estimates are carried out using the kernel density estimation method based on the built-in Matlab function
\texttt{ksdensity}.
The figure shows that the VBIL estimates are very close to the MCMC estimates.
The classical VB underestimates the posterior variance of $\tau^2$ in this example.
The clock times taken to run the VB, VBIL and MCMC procedures are 4, 2.9 and 505 minutes respectively.
However, we note that the running time depends on many factors
such as the programming language being used and the initialization of the procedures.
All the examples in this paper are run on 
an Intel Core 16 i7 3.2GHz desktop supported by the Matlab Parallel Toolbox with 8 local processors.
Obviously, the more processors we have, the faster the VBIL procerdure is. 

\subsubsection*{Large data example} 
One of the main advantages of VBIL is its scalability,
i.e. it is applicable in large data cases.
This section describes a scenario where it is difficult to use the PMMH and $\IS^2$ methods.
Consider a large data case with a large number of panels $n$.
From \eqref{eq:var_z}, for fixed $N_i$, the variance of the log-likelihood estimator $\V(z|\t)$
increases linearly with $n$.
Therefore, when $n$ is large enough, the PMMH and $\IS^2$ methods will not work in a practical sense, because $\V(z|\t)$ can be very large \citep{Flury:2011}.
In this GLMM setting, PMMH and $\IS^2$ do not work when $\V(z|\t)$ is as large as 6 or 7.
One can decrease $\V(z|\t)$ by increasing $N_i$, but this can be too computationally expensive to be practical.  

We generate a data set of $n=3000$ from the following logistic model with a random intercept 
\bea\label{e:bino sil}
p(y_{ij}|\beta,\a_i)&=&\text{Binomial}(1,p_{ij}),\\
\text{logit}(p_{ij}) &=&\beta_1+\beta_2x_{ij}+\a_i,\;\;\a_i\sim N(0,\tau^2),\;\;i=1,...,n,j=1,...,n_i,\notag 
\eea
with $\beta = (-1.5,2.5)',\ \tau^2=1.5,\ n_i=5$, $x_{ij}\sim U(0,1)$.
It takes, on average across different $\t$, 30 seconds to carry out each likelihood estimation with the numbers of particles $N_i$ tuned to target $\V(z|\t)=1$,
which requires $\bar N=\sum N_{i}/n=3855$ particles.
So if an optimal PMMH procedure was run on our computer to generate a chain of 30000 iterations as done in the Six City data example, it would take 10.4 days.
We now run VBIL with $\s^2$ set to 30, which on average requires $\bar N=\sum N_{i}/n=187$ particles and 0.7 seconds for each likelihood estimation.
The VBIL procedure stopped after 15 iterations with the clock time taken was 23 minutes.
Figure \ref{f:large data figure} plots the variational approximations of the marginal posteriors,
which are bell shaped as expected with a large dataset. 

\begin{figure}[h]
\centering
\includegraphics[width=1\textwidth,height=.3\textheight]{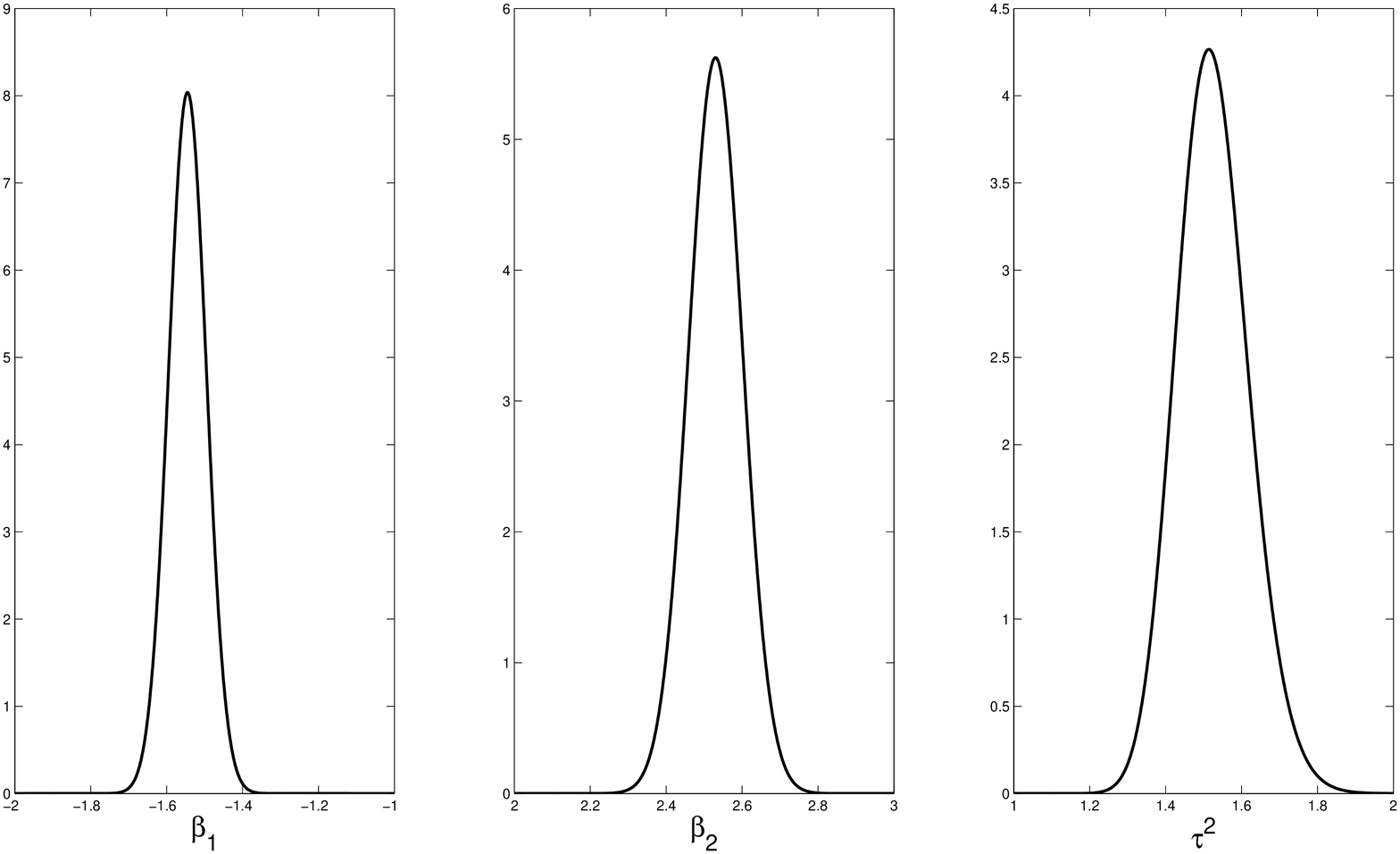}
\caption{Application to GLMM: large data}
\label{f:large data figure}
\end{figure}

\subsection{Application to state space models}\label{sec:ss models}
In state space models, the observations $y_t$ are observed in time order.
At time $t$, the distribution of $y_t$ conditional on a state variable $x_t$ is independently distributed as
\beqn
y_t|x_t\sim g_t(y_t|x_t,\t),
\eeqn
and the state variables $\{x_t\}_{t\geq1}$ are a Markov chain with
\beqn
x_1\sim\mu_\t(\cdot),\;\;\; x_t|x_{t-1}\sim f_t(x_t|x_{t-1},\t).
\eeqn
The likelihood of the data $y=y_{1:T}$ is given by
\beq\label{e:llh}
p(y|\t)=\int p(y|x,\t)p(x|\t)dx
\eeq
with $x=x_{1:T}$ and
\beqn
p(x|\t)=\mu_\t(x_1)\prod_{t=2}^Tf_t(x_t|x_{t-1},\t),\;\;p(y|x,\t)=\prod_{t=1}^Tg_t(y_t|x_t,\t).
\eeqn
Given a value of $\t$, the likelihood $p(y|\t)$ can be unbiasedly estimated by an importance sampling estimator \citep{Shephard:1997,Durbin:1997} or by a particle filter estimator \citep{DelMoral:2004,Pitt:2012}, $\wh p_N(y|\t)$,
with $N$ the number of particles.

An important example of state space models is the stochastic volatility (SV) model.
The time series data $y_t$ is modeled as
\bean
y_t&=&\exp(x_t/2)w_t,\;w_t\sim \N(0,1),\\
x_t&=&\mu+\phi (x_{t-1}-\mu)+\s v_t,\; x_1\sim \N(\mu,\frac{\s^2}{1-\phi^2}),\ v_t\sim \N(0,1),
\eean
with $\mu\in\mathbb{R},\ \phi\in(-1,1)$ and $\sigma^2>0$. Let $\tau=(1+\phi)/2\in(0,1)$;
we will estimate $\tau$ but report results for $\phi$.
The model parameters are $\t=(\mu,\tau,\sigma^2)$. 
We follow \cite{Kim:1998} and use a normal prior $\N(0,10)$ for $\mu$, a Beta prior $B(20,1.5)$ for $\tau$ and an inverse gamma IG(2.5,0.025) for $\s^2$.

To illustrate the VBIL algorithm for state space models,
we analyze the weekday close exchange rates $r_t$ for the Australian Dollar/US Dollar from 5/1/2010 to 31/12/2013.
The data are available from the Reserve Bank of Australia.
The data $y_t$ is
\beqn
y_t=100\left(\log\frac{r_{t+1}}{r_{t}}-\frac{1}{T}\sum_{i=1}^T\log\frac{r_{i+1}}{r_{i}}\right),\;t=1,...,T=1001.
\eeqn
We use the variational distribution $q_\l(\t) = q(\mu)q(\tau)q(\s^2)$,
where $q(\mu)$ is $\N(\mu_\mu,\s^2_\mu)$, $q(\tau)$ is Beta$(\alpha_\tau,\beta_\tau)$ and $q(\s^2)$ is inverse gamma IG$(\alpha_{\s^2},\beta_{\s^2})$.
We employ the constraint $\alpha_\tau>1$ and $\beta_\tau>1$ to make sure that $q(\tau)$ has a mode.
The likelihood estimator $\wh p_N(y|\t)$ is computed by a basic particle filter.
We then run the VBIL algorithm with $S=1000$ samples, starting with 
$\mu_\mu=0,\ \s^2_\mu=0.3$, $\alpha_\tau=95,\ \beta_\tau=5$, $\alpha_{\s^2}=11,\ \beta_{\s^2}=1$.
This initial point is set so that the initial mean values of 
$\mu,\ \phi$ and $\sigma^2$ are $0,\ 0.9$ and $0.1$ respectively,
which is pretty far away from the posterior means; see Figure \ref{f:sv figure}.
The VBIL algorithm stops after 28 iterations.
Figure \ref{f:lower bound figure}(b) plots the scaled lower bounds over the iterations.

The VBIL is compared to pseudo-marginal MCMC simulation,
based on an adaptive random walk Metropolis-Hastings algorithm, 
with 100,000 iterations starting from the same values $\mu=0$, $\tau=0.95$ and $\s^2=0.1$.
The number of particles used in MCMC is fixed at $N=300$, so that
$\V(\wh p_N(y|\bar\t))\approx 1$ at the initial value $\bar\t=(0,0.95,0.1)$.
The number of particles used in VBIL is fixed at $N=100$ as the use of randomised QMC for generating $\t$
helps reduce greatly the variance in estimating the gradient.
We fix $N$ in this example as it is difficult to estimate the variance of log-likelihood estimates
obtained by the particle filter.

Figure \ref{f:sv figure} plots the MCMC
estimates (dotted line) and the VBIL estimates (solid line) of the marginal posteriors.
The figure shows that the VBIL estimates are close to the MCMC estimates
but consume significantly less computational resources.
The CPU times taken to run the VBIL and MCMC procedures are 0.7 and 28 minutes respectively.

\begin{figure}[h]
\centering
\includegraphics[width=1.1\textwidth,height=.4\textheight]{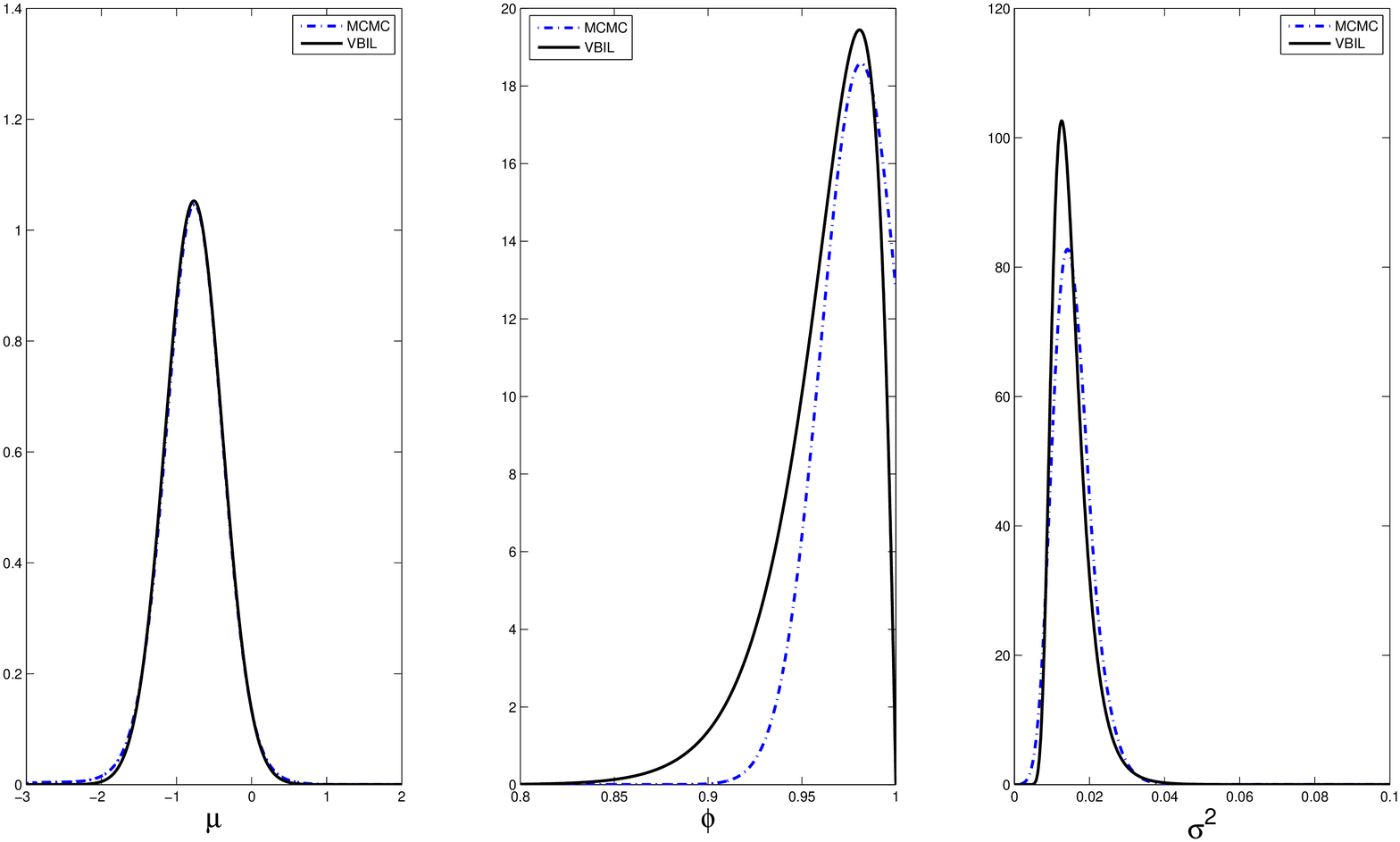}
\caption{Application to state space models: Exchange rate data}
\label{f:sv figure}
\end{figure}
\subsection{Application to ABC}\label{sec:ABC}  
In many modern applications, such as in genetics \citep{Tavare:1997},
we either cannot evaluate the likelihood $p(y|\t)$ pointwise or do not wish to do so, but we can sample from it, i.e. we can simulate $y'\sim p(\cdot|\t)$.   
Approximate Bayesian computation \citep{Tavare:1997} approximates the likelihood by 
\beq\label{eq:likelihood free}
p_\text{LF}(y|\t)=\int K_\epsilon(S(y'),S(y))p(y'|\t)dy',
\eeq
where $K_\epsilon(.,.)$ is a kernel with the bandwidth $\epsilon$
and $S(.)$ is a vector of summary statistics.
Inference is then based on the approximate posterior $p_\text{ABC}(\t|y)\propto p(\t)p_\text{LF}(y|\t)$.
Because the likelihood-free function $p_\text{LF}(y|\t)$ can be unbiasedly estimated by
\beqn
\wh p_N^\text{LF}(y|\t)=\frac{1}{N}\sum_{i=1}^N K_\epsilon(S(y^{[i]}),S(y)),\;\;y^{[i]}\stackrel{iid}{\sim}p(\cdot|\t),
\eeqn 
it is straightforward to use the VBIL algorithm to approximate $p_\text{ABC}(\t|y)$.

We illustrate the application of the VBIL algorithm to ABC by using it to fit an $\a$-stable distribution.
$\alpha$-stable distributions \citep{Nolan:2007} are a class of heavy-tailed distributions 
used in many statistical applications.
An $\alpha$-stable distribution $\mathcal{S}(\alpha,\beta,\gamma,\delta)$ is
parameterized by the stability parameter $\alpha\in (0,2)$, skewness $\beta\in(-1,1)$, scale $\gamma>0$ and location $\delta\in\mathbb{R}$. 
The main difficulty when working with $\alpha$-stable distributions is that they do not have closed form densities,
which makes it difficult to do inference.
However, as it is easy to sample from an $\alpha$-stable distribution,
one can use ABC techniques for Bayesian inference \citep{Peters:2012}.
We illustrate in this example that VBIL provides an efficient approach for fitting $\alpha$-stable distributions.

We generate a data set $y$ with $n=500$ observations 
from a univariate $\alpha$-stable distribution $\mathcal{S}(1.5,0.5,1,0)$.
Let $\wh q_p$ be the $p$th quantile of a pseudo-data set $y'$
generated from $\mathcal{S}(\alpha,\beta,\gamma,\delta)$.
We follow \cite{Peters:2012} and use the summary statistics $S(y')=(\wh v_\alpha,\wh v_\beta,\wh v_\gamma,\wh v_\delta)$ with
\beqn
\wh v_\alpha = \frac{\wh q_{0.95}-\wh q_{0.05}}{\wh q_{0.75}-\wh q_{0.25}},\; 
\wh v_\beta = \frac{\wh q_{0.95}+\wh q_{0.05}-2\wh q_{0.5}}{\wh q_{0.95}-\wh q_{0.05}},\; 
\wh v_\gamma = \frac{\wh q_{0.75}-\wh q_{0.25}}{\gamma},\; 
\wh v_\delta = \frac{1}{n}\sum_{i=1}^ny_i'.
\eeqn
For the observed data $y$, the parameter $\gamma$ in $\wh v_\gamma$
is estimated using McCulloch's method \citep{McCulloch:1986}. 
As the parameterization is discontinuous at $\alpha = 1$, resulting in poor estimates of the summary statistics,
we consider the case with $\a>1$ and restrict the support of $\alpha$ to the interval $(1.1,2)$ as in \cite{Peters:2012}.

We reparameterize 
\beqn
\wt\alpha = \log(\frac{\a-1.1}{2-\a})\in\mathbb{R},\;\;
\wt\beta= \log(\frac{\beta+1}{1-\beta})\in\mathbb{R},\;\;
\wt\gamma=\log(\gamma)\in\mathbb{R},\;\;\wt\delta=\delta\in\mathbb{R},
\eeqn
and estimate $\wt\theta=(\wt\alpha,\wt\beta,\wt\gamma,\wt\delta)$ but report the results for $(\alpha,\beta,\gamma,\delta)$.
We use a normal prior $\wt\t\sim\N(0,100I_4)$ and approximate the posterior $p(\wt\t|y)$ by a normal variational distribution $q_\l(\wt\t) = N(\mu_{\wt\t},\Sigma_{\wt\t})$.
One can work with the original parameterization $(\alpha,\beta,\gamma,\delta)$
and use some form of factorization $q(\a)q(\beta)q(\gamma)q(\delta)$.
We choose to work with $\wt\theta$ to account for the posterior dependence between the parameters.
This also illustrates the flexibility of the VBIL method
in the sense that it can be used without requiring factorization. 

We use the Gaussian kernel with covariance matrix $0.01I_4$ for the likelihood-free $p_\text{LF}(y|\t)$ in \eqref{eq:likelihood free}.
The VBIL is compared to pseudo-marginal Metropolis-Hastings methods with 20,000 iterations after 5000 burnins.
For the standard PMMH \citep{Andrieu:2009}, the number of pseudo-data sets $N=20$ is selected set after some trials in order to have a well-mixing chain.
Efficient versions of PMMH has been proposed recently,
which are more tolerant of noise in the likelihood estimates.
Here we compare VBIL to the blockwise PMMH method of \cite{Tran:2016}. 
For the blockwise PMMH, we set $N=5$.
We also use this value of $N$ in VBIL.
Table \ref{tab: abc example} shows the VBIL and MCMC estimates, and the CPU times.
As shown, VBIL is orders of magnitude faster than MCMC in this example.
Figure \ref{f:lower bound figure}(c) plots the scaled lower bounds over the iterations.

\begin{table}[h]
\centering
\begin{tabular}{c|c|c|c|c}
&True&Standard PMMH &Blockwise PMMH&VBIL\\
\hline
$\alpha$&1.5&1.57 (0.15)&1.58 (0.14)&1.57 (0.11)\\
$\beta$&0.5&0.46 (0.21)&0.45 (0.21)&0.48 (0.16)    \\
$\gamma$&1&1.04 (0.12)&1.04 (0.12)& 1.02 (0.12)    \\
$\delta$&0&-0.08 (0.21)&-0.09 (0.18)&-0.08 (0.14)\\
\hline
CPU time (min)&&12.56&7.62&0.12\\
\end{tabular}
\caption{ABC example: Standard PMMH, blockwise PMMH and VBIL estimates of $\alpha$, $\beta$, $\gamma$ and $\delta$. The numbers in brackets are estimates of the posterior standard deviations.}
\label{tab: abc example}
\end{table}
              

\subsection{Using VBIL to improve marginal posterior estimates}\label{sec:improve VB} 
A drawback of VB methods in general is that the factorization assumption as in \eqref{eq:factorization} 
ignores the posterior dependence between the factors,
which might lead to poor approximations of the posterior variances \citep{Neville:2014}.
We now show how the VBIL algorithm can be used to help overcome this problem.

Suppose that we would like to have a highly accurate VB approximation to the marginal posterior $p(\theta^{(j)}|y)$.
We restrict ourselves to the case with a tractable likelihood for simplicity,
but the following discussion also applies when the likelihood is intractable.
The likelihood of $\theta^{(j)}$, 
\beq\label{eq:likelihood of thetaj}
p(y|\theta^{(j)}) = \int p(\theta^{(\setminus j)}|\theta^{(j)}) p(y|\theta^{(1)},...,\theta^{(K)}) d\theta^{(\setminus j)},
\eeq
with $\theta^{(\setminus j)}=(\theta^{(1)},...,\theta^{(j-1)},\theta^{(j+1)},...,\theta^{(K)})$,
is in general intractable but can be estimated unbiasedly. 
Let $q(\t^{(\setminus j)})$ be an approximation to the marginal posterior $p(\t^{(\setminus j)}|y)$ resulting from a classical VB method
that uses the factorization \eqref{eq:factorization}.
The integral in \eqref{eq:likelihood of thetaj} can be estimated unbiasedly 
using importance sampling with the proposal density $q(\t^{(\setminus j)})$ or a tail-flattened version of it.
This is accurate enough in practice because VBIL does not require a very accurate estimate of $p(y|\theta^{(j)})$ as discussed in Section \ref{sec:optimal sigma}.
The VBIL algorithm can then be used to approximate the marginal posterior $p(\theta^{(j)}|y)$ directly with $\theta^{(\setminus j)}$ integrated out. The resulting approximation is often highly accurate as the dependence between $\theta^{(j)}$ and $\theta^{(\setminus j)}$ is taken into account.
  
A formal justification is as follows. We use the notation as in \eqref{eq:factorization} and write $\l=(\l^{(j)},\l^{(\setminus j)})$.
Suppose that we estimate the marginal posterior of $\l^{(j)}$ by $q_{\l^{(j)}}(\t^{(j)})$ which belongs to a family $\mathcal F=\{q_{\l^{(j)}}(\t^{(j)}),\l^{(j)}\in\Lambda\}$.
VBIL proceeds by minimizing
\beqn
\Kl_j(\l^{(j)})=\int q_{\l^{(j)}}(\t^{(j)})\log\frac{q_{\l^{(j)}}(\t^{(j)})}{p(\t^{(j)}|y)}d\t^{(j)}
\eeqn
over $\l^{(j)}\in\Lambda$. Let $\l^{(j)}_*$ be the VBIL estimator.
Under Assumptions 1 and 2 or when the number of samples $N$ used to estimate \eqref{eq:likelihood of thetaj} is large enough, $\l^{(j)}_*$ is guaranteed to be a minimizer of $\Kl_j(\l^{(j)})$.
Assume further that $\Kl_j(\l^{(j)})$ is convex, then
\beq\label{eq:KL_j inequality}
\Kl_j(\l^{(j)}_*)\leq \Kl_j(\l^{(j)})\;\;\text{for all}\;\;\l^{(j)}\in\Lambda.
\eeq
If we use a VB procedure with a factorization of the form $q_\l(\t)=q_{\l^{(j)}}(\t^{(j)})q_{\l^{(\setminus j)}}(\t^{(\setminus j)})$ 
where $q_{\l^{(j)}}(\t^{(j)})$ belongs to the same family $\mathcal F$,
then VB proceeds by minimizing the KL divergence 
\bea\label{eq:KL factor}
\Kl(\l^{(j)},\l^{(\setminus j)})&=&\int q_{\l^{(j)}}(\t^{(j)})q_{\l^{(\setminus j)}}(\t^{(\setminus j)})\log\frac{q_{\l^{(j)}}(\t^{(j)})q_{\l^{(\setminus j)}}(\t^{(\setminus j)})}{p(\t^{(j)},\t^{(\setminus j)}|y)}d\t^{(j)}d\t^{(\setminus j)}\notag\\
&=&\int q_{\l^{(j)}}(\t^{(j)})\log\frac{q_{\l^{(j)}}(\t^{(j)})}{p(\t^{(j)}|y)}d\t^{(j)}\notag\\
&&\phantom{ccccc}+\int q_{\l^{(j)}}(\t^{(j)})\int q_{\l^{(\setminus j)}}(\t^{(\setminus j)})\log\frac{q_{\l^{(\setminus j)}}(\t^{(\setminus j)})}{p(\t^{(\setminus j)}|\t^{(j)},y)}d\t^{(\setminus j)}d\t^{(j)}\notag\\
&=&\Kl_j(\l^{(j)})+\int q_{\l^{(j)}}(\t^{(j)})\int q_{\l^{(\setminus j)}}(\t^{(\setminus j)})\log\frac{q_{\l^{(\setminus j)}}(\t^{(\setminus j)})}{p(\t^{(\setminus j)}|\t^{(j)},y)}d\t^{(\setminus j)}d\t^{(j)}.
\eea
Let $(\wt\l^{(j)},\wt\l^{(\setminus j)})$ be a minimizer of \eqref{eq:KL factor}. 
Because of the decomposition in \eqref{eq:KL factor}, the estimator $\wt\l^{(j)}$ is not necessarily the minimizer of $\Kl_j(\l^{(j)})$.
From \eqref{eq:KL_j inequality},
\beq\label{eq:VBIL estimator is better}
\Kl_j(\l^{(j)}_*)\leq \Kl_j(\wt\l^{(j)}).
\eeq
So the VBIL estimator $\l^{(j)}_*$ is no worse than the factorization-based VB estimator $\wt\l^{(j)}$ in terms of KL divergence.
   
We illustrate this application by generating $n=100$ observations from a univariate mixture of two normals
\beqn
p(x) = \omega\N(x|\mu_1,\s_1^2)+(1-\omega)\N(x|\mu_2,\s_2^2)
\eeqn   
with $\omega=0.3$, $\mu_1=-3$, $\mu_2=3$, $\s_1^2=2$ and $\s_2^2=3$.
Suppose that we are interested in getting an accurate variational approximation of the posterior $p(\omega|y)$.
Getting an accurate estimate of $w$ is often more challenging than the other parameters.
We use diffuse priors $\omega\sim U(0,1)$, $\mu_1\sim\N(0,100)$, $\mu_2\sim\N(0,100)$, $\sigma_1^2\sim (\sigma_1^2)^{-1}$ and $\sigma_2^2\sim (\sigma_2^2)^{-1}$,
and run VBIL to approximate $p(\omega|y)$ by a Beta distribution.
We use the VB algorithm of \cite{McGrory:2007},
in which the variational distribution is factorized as $q(\omega)q(\s_1^2,\s_2^2)q(\mu_1,\mu_2|\s_1^2,\s_2^2)$,
to design the proposal density to obtain an importance sampling estimator of $p(y|\omega)$. 

Figure \ref{f:improve_VB} plots the McGrory-Titterington estimate (dashed line) and VBIL estimate (solid line) of the posterior $p(\omega|y)$. 
As shown, the VBIL estimate has heavier tails than the VB estimate.
By \eqref{eq:VBIL estimator is better}, it follows that the difference between the two estimates gives an indication of the extent to which the McGrory-Titterington estimate is suboptimal.
This example shows that the VBIL method provides an attractive way to obtain accurate approximation of marginal posteriors.

\begin{figure}[h]
\centering
\includegraphics[width=1\textwidth,height=.3\textheight]{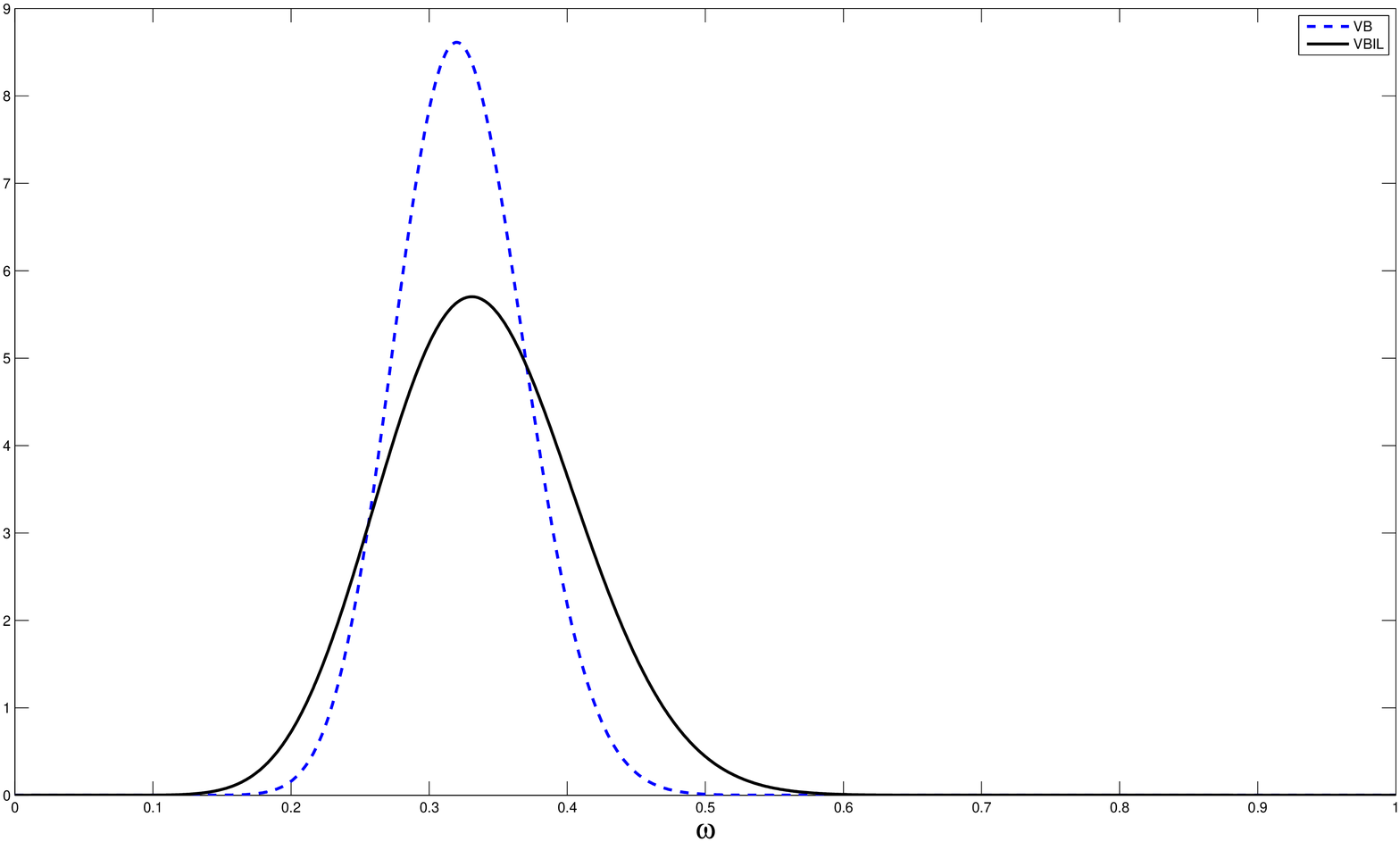}
\caption{Plots the VB (dashed line) and VBIL estimates (solid line) of the posterior $p(\omega|y)$.
}
\label{f:improve_VB}
\end{figure}

\section{Conclusion}\label{sec:conclusion}
We have proposed the VBIL, a useful VB algorithm for Bayesian inference in statistical modeling where the likelihood is intractable.
The method makes it possible to do inference in statistical models
using VB in some situations where that was previously impossible.
The main advantage of VBIL over its competitors, such as PMMH and $\IS^2$, is its scalability.
We show in the examples that VBIL is several orders of magnitude faster than these existing methods.

\subsection*{Acknowledgement}
The research of Tran and Kohn was partially supported by the ARC COE grant CE140100049.
Nott's research was supported by a Singapore Ministry of Education Academic Research Fund Tier 2 grant (R-155-000-143-112).

\section*{Appendix}
\begin{proof}[Proof of Theorem \ref{eq:effect of sig2}]
(i) Under Assumptions 1 and 2, we have that
\beq\label{eq:KL decomposition}
\Kl(\l)=\Kl(q_\l\|\pi)-\int q_\l(\t)\E(z|\t)d\t=\Kl(q_\l\|\pi)+\frac{\s^2}{2},
\eeq
where $\Kl(q_\l\|\pi)$ is the \KL{} divergence between the variational distribution $q_\l(\t)$ and the posterior $\pi(\t)$.
So, $\nabla_{\l}\Kl(\l)=\nabla_{\l}\Kl(q_\l\|\pi)$ is independent of $\s^2$, and minimizing $\Kl(\l)$ with respect to $\l$ is equivalent to minimizing $\Kl(q_\l\|\pi)$.
Algorithm 1 and 2 are the Robbins-Monro procedure for finding the root $\l^*$ of the equation $\nabla_{\l}\Kl(q_\l\|\pi)=0$.
Then, \eqref{eq:CLT} follows from Theorem 1 of \cite{Sacks:1958}
with the constant $c_{\l^*}$ independent of $\s^2$.\\
(ii) Denote $\wh h(\t,z)=\log(p(\t)\wh p_N(y|\t,z))=\log(p(\t)p(y|\t))+z=h(\t)+z$. 
We consider the case with the noisy traditional gradient in \eqref{eq: reduced var KL estimate};
the proof for the other cases is similar. 
We denote by $\wt{\nabla_{\l}\Kl}(\l^*)$ the noisy gradient obtained when the likelihood is available. 
Then, noting that $\E_{*}(\zeta_{*}(\t))=0$,
the constant $c$ in \eqref{eq:optimal c_i} is
\beqn
c=\frac{\E_{\t,z}\{\zeta_{*}(\t)^2(\log q_{\l^*}(\t)-h(\t)-z) \}}{\E_{*}\big\{\zeta_{*}(\t)^2\big\}}=\frac{\E_{*}\{\zeta_{*}(\t)^2(\log q_{\l^*}(\t)-h(\t)) \}}{\E_{*}\big\{\zeta_{*}(\t)^2\big\}}+\frac{\s^2}{2}=\wt c+\frac{\s^2}{2}.
\eeqn
We note that $\wt c$ is the control variate constant we would use to compute 
$\wt{\nabla_{\l}\Kl}(\l^*)$ if the likelihood was known. 
\bean
\V\big(\wh{\nabla_{\l}\Kl}(\l^*)\big)&=&\frac{1}{S}\V_{\t,z}\Big\{\zeta_{*}(\t)(\log q_{\l^*}(\t)-h(\t)-z-c)\Big\}\\
&=&\frac{\s^2}{S}\V_{*}\Big\{\zeta_{*}(\t)\Big\}+\frac{1}{S}\V_{*}\Big\{\zeta_{*}(\t)(\log q_{\l^*}(\t)-h(\t)+\frac{\s^2}{2}-c)\Big\}\\
&=&\frac{\s^2}{S}\V_{*}\Big\{\zeta_{*}(\t)\Big\}+\frac{1}{S}\V_{*}\Big\{\zeta_{*}(\t)(\log q_{\l^*}(\t)-h(\t)-\wt c)\Big\}\\
&=&\frac{\s^2}{S}\V_{*}\Big\{\zeta_{*}(\t)\Big\}+\V\big(\wt{\nabla_{\l}\Kl}(\l^*)\big).
\eean
Therefore,
\bean
\s^2_\text{asym}(\wh \l_M)=c_{\l^*}\V\big(\wh{\nabla_{\l}\Kl}(\l^*)\big)=\s^2_\text{asym}(\wt \l_M)+c_{\l^*}\frac{\s^2}{S}\V_{*}\Big\{\zeta_{*}(\t)\Big\}.
\eean
\end{proof}

\subsection*{Derivation for Section \ref{sec:GLMM}}
The density of the $d-$variate normal $\N(\mu,\Sigma)$ is
\beqn
q(\beta)=\frac{1}{(2\pi)^{d/2}|\Sigma|^{1/2}}\exp\Big(-\frac12(\beta-\mu)'\Sigma^{-1}(\beta-\mu)\Big).
\eeqn
A simplified form of the inverse Fisher matrix for a multivariate normal under the natural parameterization
is given in \cite{Wand:2014}. 
For a $d\times d$ matrix $A$, denote by $\vec(A)$
the $d^2$-vector obtained by stacking the columns of $A$,
by $\vech(A)$ the $\frac12d(d+1)$-vector obtained by stacking the columns of the lower triangular part of $A$.
The duplication matrix of order $d$, $D_d$, is the $d^2\times\frac12d(d+1)$ matrix of zeros and ones such that for any symmetric matrix $A$
\beqn
D_d\vech(A) = \vec(A).
\eeqn 
Let $D_d^{+}=(D_d'D_d)^{-1}D_d'$ be the Moore-Penrose inverse of $D_d$,
and $\vec^{-1}$ be the inverse of the operator $\vec$.
Then, the exponential family form of the normal distribution $q(\b)$ is $q(\beta) = \exp(T(\beta)'\l-Z(\l))$ with
\beq
T(\beta)=\begin{bmatrix}
\beta\\
\vech(\beta\beta')
\end{bmatrix},\;\;\l=\begin{bmatrix}
\l_1\\
\l_2
\end{bmatrix}=\begin{bmatrix}
\Sigma^{-1}\mu\\
-\frac12D_d'\vec(\Sigma^{-1})
\end{bmatrix}.
\eeq
The usual mean and variance parameterization is 
\beqn
\begin{cases}
\mu = -\frac12\big\{\vec^{-1}({D_d^{+}}'\l_2)\big\}^{-1}\l_1\\
\Sigma = -\frac12\big\{\vec^{-1}({D_d^{+}}'\l_2)\big\}^{-1}.
\end{cases}
\eeqn
\cite{Wand:2014} derives the following very useful formula
\beq
I_F(\l)^{-1}=\begin{bmatrix}
\Sigma^{-1}+M'S^{-1}M&-M'S^{-1}\\
-S^{-1}M&S^{-1}
\end{bmatrix},
\eeq
with
\beqn
M=2D_d^{+}(\mu\otimes I_d)\;\;\text{and}\;\; S=2D_d^{+}(\Sigma\otimes\Sigma){D_d^{+}}',
\eeqn
where $\otimes$ is the Kronecker product and $I_d$ the identity matrix of order $d$.
The gradient $\nabla_\l[\log q(\beta)]$ is 
\beq
\nabla_\l[\log q(\beta)]=\begin{bmatrix}
\beta-\mu\\
\vech(\beta\beta'-\Sigma-\mu\mu') 
\end{bmatrix}.
\eeq
For the inverse gamma distribution $q(\tau^2)$ with density
\beqn
q(\tau^2)=\frac{a^b}{\Gamma(a)}(\tau^2)^{-1-a}\exp(-b/\tau^2),
\eeqn
the natural parameters are $(a,b)$.
The Fisher information matrix for the inverse gamma is
\beqn
I_F(a,b) = \begin{pmatrix}
\nabla_{aa}[\log \Gamma(a)]&-1/b\\
-1/b&a/b^2
\end{pmatrix}.
\eeqn
and the gradient 
\bean
\nabla_a[\log q_\l(\t)]&=& -\log(\tau^2)+\log(b)-\nabla_a[\log \Gamma(a)]\\
\nabla_b[\log q_\l(\t)]&=& -\frac{1}{\tau^2}+\frac{a}{b}.
\eean

\bibliographystyle{apalike}
\bibliography{references_v1}

\begin{thebibliography}{}

\bibitem[Amari, 1998]{Amira:1998}
Amari, S. (1998).
\newblock Natural gradient works efficiently in learning.
\newblock {\em Neural computation}, 10(2):251--276.

\bibitem[Andrieu et~al., 2010]{Andrieu:2010}
Andrieu, C., Doucet, A., and Holenstein, R. (2010).
\newblock Particle {Markov chain Monte Carlo} methods.
\newblock {\em Journal of the Royal Statistical Society, Series B}, 72:1--33.

\bibitem[Andrieu and Roberts, 2009]{Andrieu:2009}
Andrieu, C. and Roberts, G. (2009).
\newblock The pseudo-marginal approach for efficient {Monte Carlo}
  computations.
\newblock {\em The Annals of Statistics}, 37:697--725.

\bibitem[Attias, 1999]{Attias:1999}
Attias, H. (1999).
\newblock Inferring parameters and structure of latent variable models by
  variational {B}ayes.
\newblock In {\em Proceedings of the 15th Conference on Uncertainty in
  Artificial Intelligence}, pages 21--30.

\bibitem[Barthelme and Chopin, 2014]{Barthelme:2014}
Barthelme, S. and Chopin, N. (2014).
\newblock Expectation propagation for likelihood-free inference.
\newblock {\em Journal of the American Statistical Association},
  109(505):315--333.

\bibitem[Beaumont, 2003]{Beaumont:2003}
Beaumont, M.~A. (2003).
\newblock Estimation of population growth or decline in genetically monitored
  populations.
\newblock {\em Genetics}, 164:1139--1160.

\bibitem[Bishop, 2006]{Bishop:2006}
Bishop, C.~M. (2006).
\newblock {\em Pattern Recognition and Machine Learning}.
\newblock New York: Springer.

\bibitem[Del~Moral, 2004]{DelMoral:2004}
Del~Moral, P. (2004).
\newblock {\em Feynman-Kac Formulae: Genealogical and Interacting Particle
  Systems with Applications}.
\newblock Springer, New York.

\bibitem[Deligiannidis et~al., 2015]{Deligiannidis:2015}
Deligiannidis, G., Doucet, A., and Pitt, M. (2015).
\newblock The correlated pseudo-marginal method.
\newblock Technical report.
\newblock arXiv:1511.04992v3.

\bibitem[Dick and Pillichshammer, 2010]{Dick:2010}
Dick, J. and Pillichshammer, F. (2010).
\newblock {\em Digital nets and sequences. Discrepancy theory and quasi-Monte
  Carlo integration}.
\newblock Cambridge University Press, Cambridge.

\bibitem[Doucet et~al., 2015]{Doucet:2013}
Doucet, A., Pitt, M.~K., Deligiannidis, G., and Kohn, R. (2015).
\newblock Efficient implementation of {Markov chain Monte Carlo} when using an
  unbiased likelihood estimator.
\newblock arXiv:1210.1871.

\bibitem[Duan and Fulop, 2013]{Duan:2013}
Duan, J.-C. and Fulop, A. (2013).
\newblock Density-tempered marginalized sequential {Monte Carlo} sampler.
\newblock Technical report, National University of Singapore.
\newblock Available at http://dx.doi.org/10.2139/ssrn.1837772.

\bibitem[Durbin and Koopman, 1997]{Durbin:1997}
Durbin, J. and Koopman, S.~J. (1997).
\newblock {Monte Carlo} maximum likelihood estimation for {non-Gaussian} state
  space models.
\newblock {\em Biometrika}, 84:669--684.

\bibitem[Durbin and Koopman, 2001]{Durbin:2001}
Durbin, J. and Koopman, S.~J. (2001).
\newblock {\em Time Series Analysis by State Space Methods}.
\newblock Oxford University Press.

\bibitem[Fitzmaurice and Laird, 1993]{Fitzmaurice:1993}
Fitzmaurice, G.~M. and Laird, N.~M. (1993).
\newblock A likelihood-based method for analysing longitudinal binary
  responses.
\newblock {\em Biometrika}, 80(1):141--151.

\bibitem[Fitzmaurice et~al., 2011]{Fitzmaurice:2011}
Fitzmaurice, G.~M., Laird, N.~M., and Ware, J.~H. (2011).
\newblock {\em Applied Longitudinal Analysis}.
\newblock John Wiley \& Sons, Ltd, New Jersey, 2nd edition.

\bibitem[Flury and Shephard, 2011]{Flury:2011}
Flury, T. and Shephard, N. (2011).
\newblock Bayesian inference based only on simulated likelihood: {P}article
  filter analysis of dynamic economic models.
\newblock {\em Econometric Theory}, 1:1--24.

\bibitem[Ghahramani and Hinton, 2000]{Ghahramani:2000}
Ghahramani, Z. and Hinton, G.~E. (2000).
\newblock Variational learning for switching state-space models.
\newblock {\em Neural computation}, 12(4):831--864.

\bibitem[Haario et~al., 2001]{Haario:2001}
Haario, H., Saksman, E., and Tamminen, J. (2001).
\newblock An adaptive {M}etropolis algorithm.
\newblock {\em Bernoulli}, 7:223--242.

\bibitem[Hoffman et~al., 2013]{Hoffman:2013}
Hoffman, M.~D., Blei, D.~M., Wang, C., and Paisley, J. (2013).
\newblock Stochastic variational inference.
\newblock {\em Journal of Machine Learning Research}, 14:1303--1347.

\bibitem[Honkela et~al., 2010]{Honkela:2010}
Honkela, A., Raiko, T., Kuusela, M., Tornio, M., and Karhunen, J. (2010).
\newblock Approximate {Riemannian} conjugate gradient learning for fixed-form
  variational {Bayes}.
\newblock {\em Journal of Machine Learning Research}, 11:3235--3268.

\bibitem[Kim et~al., 1998]{Kim:1998}
Kim, S., Shephard, N., and Chib, S. (1998).
\newblock Stochastic volatility: Likelihood inference and comparison with arch
  models.
\newblock {\em The Review of Economic Studies}, 65(3):361--393.

\bibitem[Marin et~al., 2012]{Marin:2012}
Marin, J.-M., Pudlo, P., Robert, C., and Ryder, R. (2012).
\newblock Approximate {B}ayesian computational methods.
\newblock {\em Statistics and Computing}, 22(6):1167--1180.

\bibitem[McCulloch, 1986]{McCulloch:1986}
McCulloch, J.~H. (1986).
\newblock Simple consistent estimators of stable distribution parameters.
\newblock {\em Communications in Statistics - Simulation and Computation},
  15(4):1109--1136.

\bibitem[McGrory and Titterington, 2007]{McGrory:2007}
McGrory, C. and Titterington, D. (2007).
\newblock Variational approximations in {B}ayesian model selection for finite
  mixture distributions.
\newblock {\em Computational Statistics \& Data Analysis}, 51(11):5352 -- 5367.
\newblock Advances in Mixture Models.

\bibitem[Neville et~al., 2014]{Neville:2014}
Neville, S.~E., Ormerod, J.~T., and Wand, M.~P. (2014).
\newblock Mean field variational {B}ayes for continuous sparse signal
  shrinkage: Pitfalls and remedies.
\newblock {\em Electronic Journal of Statistics}, 8(1):1113--1151.

\bibitem[Nolan, 2007]{Nolan:2007}
Nolan, J. (2007).
\newblock {\em Stable Distributions: Models for Heavy-Tailed Data}.
\newblock Birkhauser, Boston.

\bibitem[Nott et~al., 2012]{Nott:2012}
Nott, D.~J., Tan, S., Villani, M., and Kohn, R. (2012).
\newblock Regression density estimation with variational methods and stochastic
  approximation.
\newblock {\em Journal of Computational and Graphical Statistics}, 21:797--820.

\bibitem[Ormerod and Wand, 2012]{Ormerod:2012}
Ormerod, J.~T. and Wand, M.~P. (2012).
\newblock Gaussian variational approximate inference for generalized linear
  mixed models.
\newblock {\em J. Comput. Graph. Stat.}, 21:2--17.

\bibitem[Owen, 1997]{Owen:1997}
Owen, A. (1997).
\newblock Monte carlo variance of scrambled net quadrature.
\newblock {\em SIAM J. Numer. Anal.}, 34:1884--1910.

\bibitem[Paisley et~al., 2012]{Paisley:2012}
Paisley, J., Blei, D., and Jordan, M. (2012).
\newblock Variational {Bayesian} inference with stochastic search.
\newblock In {\em International Conference on Machine Learning}, Edinburgh,
  Scotland, UK.

\bibitem[Peters et~al., 2012]{Peters:2012}
Peters, G., Sisson, S., and Fan, Y. (2012).
\newblock Likelihood-free {Bayesian} inference for $\alpha$-stable models.
\newblock {\em Computational Statistics \& Data Analysis}, 56(11):3743 -- 3756.

\bibitem[Pitt et~al., 2012]{Pitt:2012}
Pitt, M.~K., Silva, R.~S., Giordani, P., and Kohn, R. (2012).
\newblock On some properties of {Markov chain Monte Carlo} simulation methods
  based on the particle filter.
\newblock {\em Journal of Econometrics}, 171(2):134--151.

\bibitem[Ranganath et~al., 2014]{Ranganath:2014}
Ranganath, R., Gerrish, S., and Blei, D.~M. (2014).
\newblock Black box variational inference.
\newblock In {\em International Conference on Artificial Intelligence and
  Statistics}, volume~33, Reykjavik, Iceland.

\bibitem[Sacks, 1958]{Sacks:1958}
Sacks, J. (1958).
\newblock Asymptotic distribution of stochastic approximation procedures.
\newblock {\em The Annals of Mathematical Statistics}, 29(2):373--405.

\bibitem[Salimans and Knowles, 2013]{Salimans:2013}
Salimans, T. and Knowles, D.~A. (2013).
\newblock Fixed-form variational posterior approximation through stochastic
  linear regression.
\newblock {\em Bayesian Analysis}, 8(4):741--908.

\bibitem[Shephard and Pitt, 1997]{Shephard:1997}
Shephard, N. and Pitt, M.~K. (1997).
\newblock Likelihood analysis of {non-Gaussian} measurement time series.
\newblock {\em Biometrika}, 84:653--667.

\bibitem[Tan and Nott, 2013]{Tan:2013}
Tan, L. S.~L. and Nott, D.~J. (2013).
\newblock Variational inference for generalized linear mixed models using
  partially noncentered parametrizations.
\newblock {\em Statistical Science}, 28:168--188.

\bibitem[Tavare et~al., 1997]{Tavare:1997}
Tavare, S., Balding, D.~J., Griffiths, R.~C., and Donnelly, P. (1997).
\newblock Inferring coalescence times from {DNA} sequence data.
\newblock {\em Genetics}, 145(2):505--518.

\bibitem[Tran et~al., 2016]{Tran:2016}
Tran, M.-N., Kohn, R., Quiroz, M., and Villani, M. (2016).
\newblock Block-wise pseudo-marginal metropolis-hastings.
\newblock Technical report.
\newblock arXiv:1603.02485.

\bibitem[Tran et~al., 2014]{Tran:2014}
Tran, M.-N., Pitt, M.~K., and Kohn, R. (2014).
\newblock Annealed important sampling for models with latent variables.
\newblock http://arxiv.org/abs/1402.6035.

\bibitem[Tran et~al., 2013]{Tran:2013}
Tran, M.-N., Scharth, M., Pitt, M.~K., and Kohn, R. (2013).
\newblock Importance sampling squared for {Bayesian} inference in latent
  variable models.
\newblock http://arxiv.org/abs/1309.3339.

\bibitem[Turner and Sahani, 2011]{Turner:2011}
Turner, R.~E. and Sahani, M. (2011).
\newblock Two problems with variational expectation maximisation for
  time-series models.
\newblock In Barber, D., Cemgil, A.~T., and Chiappa, S., editors, {\em Bayesian
  Time Series Models}. Cambridge University Press.

\bibitem[Wand, 2014]{Wand:2014}
Wand, M.~P. (2014).
\newblock Fully simplified multivariate normal updates in non-conjugate
  variational message passing.
\newblock {\em Journal of Machine Learning Research}, 15:1351--1369.

\end{thebibliography}

\end{document}